\numberwithin{equation}{section}
\DeclareMathOperator*{\argmin}{arg\,min}
\title[A Penalty Method for Solving HJB Equations]{A Penalty Method for the Numerical Solution of Hamilton-Jacobi-Bellman (HJB) Equations in Finance}
\author{J. H. Witte}
\author{C. Reisinger}
\address{Mathematical Institute\\
University of Oxford}
\date{October 2010}
\email{\texttt{[\,witte\,,\,reisinge\,]\,@\,maths.ox.ac.uk}}
\thanks{This research was supported by Balliol College, University of Oxford, and the
UK Engineering and Physical Sciences Research Council (EPSRC)}
\begin{document}

\begin{abstract}
We present a simple and easy to implement method for the numerical
solution of a rather general class of Hamilton-Jacobi-Bellman (HJB) equations.
In many cases, classical finite difference discretisations can be shown to converge to the unique viscosity solutions of the considered problems.
However, especially when using fully implicit time stepping schemes with their desirable stability properties, one is still faced with the considerable task of solving the resulting nonlinear discrete system. In this paper, we introduce a penalty method which approximates the nonlinear discrete system to an order of $O(1/\rho)$, where $\rho>0$ is the penalty parameter, and we show that an iterative scheme can be used to solve the penalised discrete problem in finitely many steps. We include a number of examples from mathematical finance
for which the described approach yields a rigorous numerical scheme and present numerical results.
\bigskip

\hspace{-.45cm}\textit{Key Words:} HJB Equation, Numerical Solution, Penalty Method, Convergence Analysis, Viscosity Solution\bigskip

\hspace{-.45cm}\textit{2010 Mathematics Subject Classification:} 65M12, 93E20\bigskip

\end{abstract}

\maketitle

\newtheorem{theorem}{Theorem}[section]
\newtheorem{la}[theorem]{Lemma}
\newtheorem{cor}[theorem]{Corollary}
\newtheorem{remark}[theorem]{Remark}
\newtheorem{prob}[theorem]{Problem}	
\newtheorem{definition}[theorem]{Definition}
\newtheorem{alg}[theorem]{Algorithm}
\newtheorem{prop}[theorem]{Proposition}

\section*{Introduction}

In their most general form, Hamilton-Jacobi-Bellman (HJB) equations arise when applying Bellman's principle of optimality to problems of stochastic optimal control, in which case the solution to the HJB equation can often be found to solve the control problem (cf.\,\cite{XYZ_StochControlHJBBook}). There is a wide range of practical applications to optimal control, amongst others in mathematical finance, and many examples are listed in \cite{Fleming_Soner_ControlledMarkovProcesses,KushnerDupuis_StochasticControlProblems_ContTime}.
Naturally, the high practical relevance gives rise to the question of how to best solve stochastic control problems numerically. Early (and still relevant) methods rely on Markov chain approximation, of which an extensive overview can be found in \cite{Kushner_NumMethodsStochControl_ContTime, KushnerDupuis_StochasticControlProblems_ContTime, Fleming_Soner_ControlledMarkovProcesses}. The notion of viscosity solution, introduced in \cite{UsersGuide_ViscositySols}, together with the most helpful results of \cite{BarlesMainArticle}, permits another line of attack, namely the direct solution of the corresponding HJB equations by discretisation methods; for details, see \cite{BarlesMainArticle, Barles_ErrorBoundsMonotoneApproxSchemes, Barles_OntheConvergenceRate_ApproxHJBEq} and references therein.
It is interesting to note that both approaches can be found to be closely related, which is most apparent for explicit time stepping methods (e.g.\,cf.\,\cite{Song_MarkovChainHJBEq}).\bigskip

\begin{figure}[ht]
\centering
\includegraphics[width=10cm,height=6.5cm]{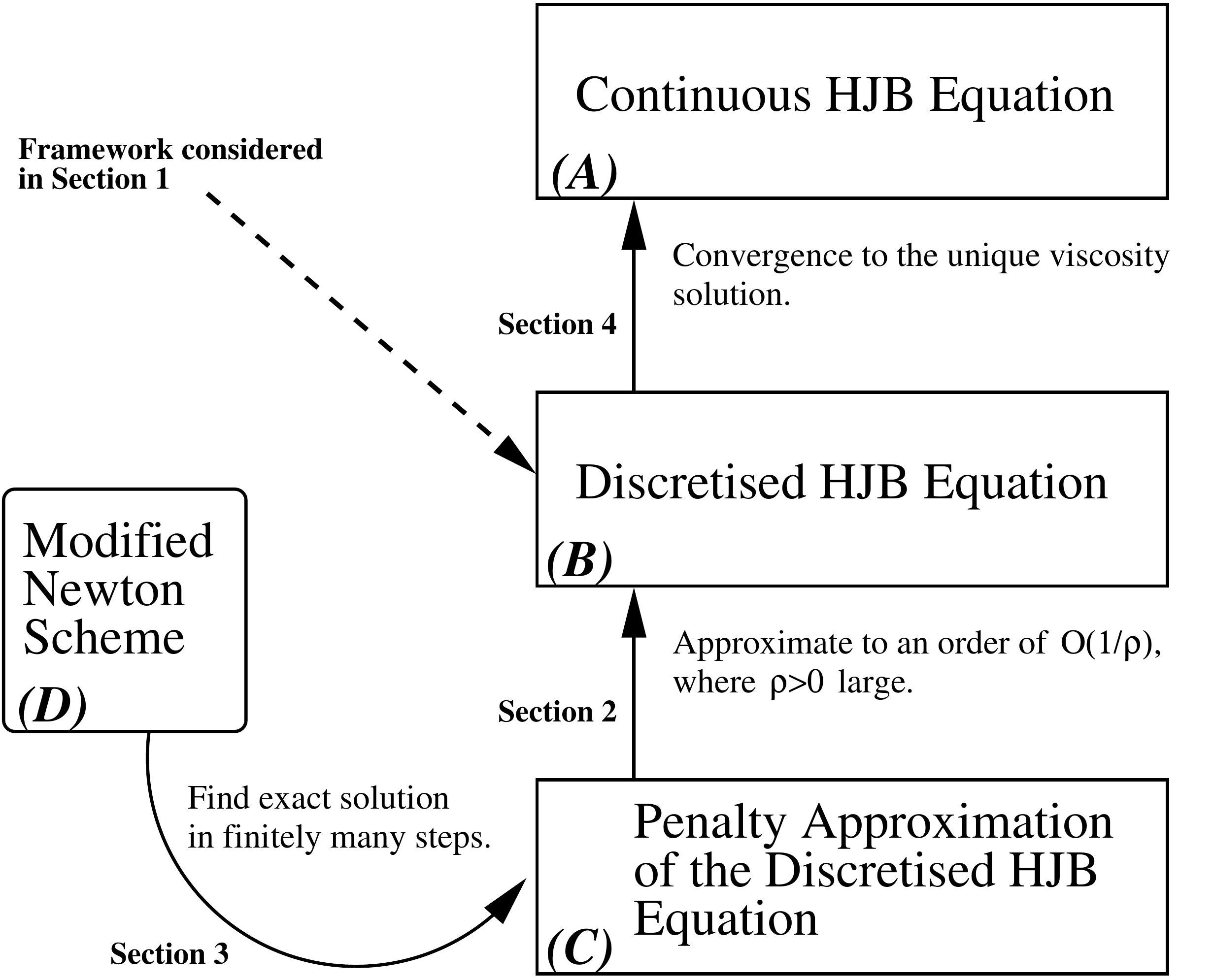}
\caption{To numerically approximate the HJB equation in (A), we consider a discrete version (B).
Given some conditions that have to be checked for every particular example, the grid convergence
of (B) to (A) can be guaranteed.
We then approximate (B)
by the penalty problem (C), which in turn can be solved exactly using algorithm (D).}
\label{fig:StructureOfPaper}
\end{figure}

In this paper, we are concerned with the numerical solution of a certain kind of HJB equation with a finite control set.
Using results of \cite{BarlesMainArticle}, where it was -- roughly speaking -- shown that every stable and consistent discretisation converges to the financially relevant solution, one can often find numerical approximations by employing standard finite difference schemes. However, the resulting discrete system is only straightforward to solve if fully explicit time stepping is used, in which case one suffers from undesirable stability constraints. When using fully implicit or weighted time stepping schemes, one faces a nonlinear discrete system, which is generally nontrivial to solve. Existing algorithms are, in one way or another, all related to policy iteration techniques, which have been used in the setting of Markov chain approximation \cite{KushnerDupuis_StochasticControlProblems_ContTime,Santos_PolicyIteration} as well as for HJB equations \cite{Forsyth_Controlled_HJB_PDEs_Finance, Forsyth_NonlinearPDEFinance, Wang_Forsyth_MaximalUSeCentralDifferences_HJBFinance}.
In the present paper, we introduce a novel approach by which the nonlinear discrete system can be solved using a penalisation technique. More precisely, we show how
a penalty approach that was devised for American option problems in \cite{ForsythQuadraticConvergence}
can be extended to solve discretised HJB equations provided the discretisation matrices obey a certain structure. (Penalty methods in general have a long history and have, for example, already been used extensively in the setting of variational inequalities in \cite{Lions_ApplVarIneqStochControl}.)
Examples from mathematical finance that fit into our framework include uncertain volatility models \cite{Avellaneda_UncertainVol},
transaction cost models \cite{Leland_TransactionCosts},
and unequal borrowing/lending rates and stock borrowing fees \cite{Amadori_NonlinINtegroDiffProb_OptionPricing_ViscositySolApproach,Bergman_DifferentialINterestRates}; a succinct overview of these and similar models can be found in \cite{Forsyth_Controlled_HJB_PDEs_Finance,Forsyth_NonlinearPDEFinance}.

\subsection*{Structure of this Paper} We aim to devise a penalty-based numerical scheme for the
solution of a rather general class of HJB equations; the broad concept is depicted in Figure \ref{fig:StructureOfPaper}.\medskip

\paragraph{\it{Section \ref{ProblemFormulation}}} We define the considered class of HJB
equations (denoted by \textit{(A)} in Fig.\,\ref{fig:StructureOfPaper}) and -- having fully implicit and weighted time stepping schemes in mind -- a fairly general
nonlinear discrete problem (denoted by \textit{(B)} in Fig.\,\ref{fig:StructureOfPaper}).
The structure of \textit{(B)} is chosen such that we have an unconditionally stable numerical scheme which converges
to \textit{(A)} if the latter has a unique viscosity solution and satisfies a strong comparison principle.\medskip

\paragraph{\it{Section \ref{Section_PenalisationDiscrProb}}}
The nonlinear discrete problem \textit{(B)} is nontrivial to solve and will be our main consideration. In a first step, we introduce a penalised
problem (denoted by \textit{(C)} in Fig.\,\ref{fig:StructureOfPaper}) which approximates \textit{(B)} to an order of $O(1/\rho)$, where $\rho>0$ is the penalty parameter.
It is worth pointing out that as the discretisation
in \textit{(B)} is stable (due to the monotonicity of the discretisation matrices), any errors introduced by the
penalisation will stay within bounds and, hence, will not compromise the overall numerical scheme.
Since \textit{(C)} is a nonlinear discrete problem itself, the feasibility of the penalty approach crucially depends on
the possibility of conveniently solving \textit{(C)} in a second step; in Section \ref{SectionModNewton}, we present
an algorithm doing exactly that.\medskip

\paragraph{\it{Section \ref{SectionModNewton}}} We present an iterative procedure (denoted by \textit{(D)} in Fig.\,\ref{fig:StructureOfPaper}) with
finite termination which finds the exact solution to \textit{(C)}.
Even though the theoretical bound on the number of iterations depends on the the number of grid points and the cardinality
of the finite control, the algorithm always converges -- independently of
the grid size and the penalty parameter -- in less than four steps for our numerical examples.\medskip

\paragraph{\it{Section \ref{ExHJBEqInFinance}}}
We give a detailed outline of how various examples from mathematical finance fit into our framework. Furthermore, for the HJB
equation arising in European option pricing with different borrowing/lending rates and stock borrowing fees, we show in detail
how convergence of \textit{(B)} to the unique viscosity solution of \textit{(A)} can be guaranteed.\medskip

\paragraph{\it{Section \ref{Section_NumResults}}} We conclude by presenting numerical results and compare our approach
to the method of policy iteration. Also, we provide a brief proof showing that policy iteration, just like the
iterative scheme proposed by us in Section \ref{SectionModNewton}, has finite termination when applied to the kind of HJB equation considered in this paper.

%
%
%
%
%
%
%
%
%
%

\section{Problem Formulation}\label{ProblemFormulation}

We begin by (slightly informally) introducing the type of HJB equation (see Problem\,\ref{DifferentialOperatorProblem}) considered
in this paper and for which we subsequently develop a fully coherent numerical scheme.
In Section \ref{ExHJBEqInFinance}, we will then show that the problems from mathematical finance
listed in the introduction all match our framework and that rigorous problem formulations can be obtained by employing the theory of viscosity solutions (as introduced in \cite{UsersGuide_ViscositySols}).

\begin{prob}\label{DifferentialOperatorProblem}
Let $\mathbb{S}$, $|\mathbb{S}|<\infty$, be a finite set of controls. Let $\Omega\subset\mathbb{R}^n$ (with $n\in\mathbb{N}$) be an open set.
Let $\mathcal{L}_s$\,, $s\in\mathbb{S}$, be a number of linear differential operators on $\Omega$.
Find a function $V:\Omega\to\mathbb{R}$ such that
\begin{equation}
\min\{\mathcal{L}_sV : s\in\mathbb{S}\}=0.\label{DifferentialOperatorProblem_Eq1}
\end{equation}
\end{prob}

At this stage, we are not yet concerned with possible boundary/initial conditions needed for the uniqueness of a solution to Problem \ref{DifferentialOperatorProblem} since formulation \eqref{DifferentialOperatorProblem_Eq1} is primarily motivational.
Nevertheless, we point out that (if we have an additional linear differential operator $\mathcal{L}^*$) formulation \eqref{DifferentialOperatorProblem_Eq1} is sufficiently general to include problems
of the form
\begin{equation}
\mathcal{L}^*V +\min\{\tilde{\mathcal{L}}_sV : s\in\mathbb{S}\}=0\label{DifferentialOperatorProblem_Reformulated}
\end{equation}
when defining
$\mathcal{L}_s := \tilde{\mathcal{L}}_s + \mathcal{L}^*$, $s\in\mathbb{S}$.
In many applications, $\mathcal{L}^*V$ in \eqref{DifferentialOperatorProblem_Reformulated} is taken to be the time derivative of $V$ (i.e. $\mathcal{L}^*V=\partial V/\partial t$). Also, the approach described in this paper works analogously when replacing the ``min'' in equations \eqref{DifferentialOperatorProblem_Eq1} and \eqref{DifferentialOperatorProblem_Reformulated} by ``max'', which will be explained in Remark \ref{SubAlsoOK}.

\begin{remark}\label{RelationalOperators_Remark}
Let $N\in\mathbb{N}$ and define $\mathcal{N}:=\{1,\ldots,N\}$.
Consider two vectors $x$, $y\in\mathbb{R}^N$ and a matrix $A\in\mathbb{R}^{N\times N}$. For any $i\in\mathcal{N}$, we denote by $(x)_i$ and $(A)_i$ the $i$-th coordinate and the $i$-th row of vector $x$ and matrix $A$, respectively. When writing $x\geq 0$, we generally mean that $(x)_i\geq 0$ for all $i\in\mathcal{N}$, and we take $z:=\min\{x,y\}$ to be the vector satisfying $(z)_i=\min\{(x)_i\,,(y)_i\}$ for all $i\in\mathcal{N}$.
The definitions extend trivially to other relational operators and to the maximum of two vectors.
\end{remark}

In the following, whenever the choice of $N\in\mathbb{N}$ is clear from the context, we will use the set $\mathcal{N}$ as introduced in Remark \ref{RelationalOperators_Remark} without explicitly saying so.\bigskip

We make the assumption that we can find a sensible (i.e. stable and convergent) discretisation of Problem \ref{DifferentialOperatorProblem} that results in discrete systems of the following form.

\begin{prob}\label{DiscreteProbDef}
Find $x\in\mathbb{R}^N$ such that
\begin{equation}
\min\{A_sx-b_s : s\in\mathbb{S}\}=0,\label{DiscreteProbDef_Eq1}
\end{equation}
where
\begin{itemize}
\item $b_s\in\mathbb{R}^N$, $s\in\mathbb{S}$, are vectors,
\item $A_s\in\mathbb{R}^{N\times N}$, $s\in\mathbb{S}$, are matrices with non-positive off-diagonal entries, positive diagonal entries, non-negative row sums, and at least one positive row sum,
\item and the positive row sum is equally located in all matrices $A_s$\,, $s\in\mathbb{S}$.
\end{itemize}
\end{prob}

In Section \ref{StableConvDiscretisations}, we will show that stable and convergent discretisations satisfying the setup of Problem \ref{DiscreteProbDef} can be found in many relevant cases (and, most notably, for many examples from finance), justifying our assumptions.
The conditions on the matrices are motivated by the fact that a square matrix with non-positive off-diagonal entries, positive diagonal entries, non-negative row sums, and at least one positive row sum is an M-matrix (cf.\,\cite{FiedlerSpecialMatrices,Varga_IterativeMatrixAnalysis}), which means, in particular, that it has a non-negative inverse; we shall make frequent use of this fact.

\begin{remark}\label{SubAlsoOK}
If the ``min'' in equation \eqref{DifferentialOperatorProblem} is replaced by ``max'', equation \eqref{DiscreteProbDef_Eq1} becomes
\begin{equation}
\min\{\tilde{A}_sx-\tilde{b}_s : s\in\mathbb{S}\}=0,\label{DiscreteProbDef_Eq1_Modified}
\end{equation}
with $\tilde{A}_s:=-A_s$ and $\tilde{b}_s:=-b_s$\,. In this case, $-\tilde{A}_s$\,, $s\in\mathbb{S}$, is an M-matrix, which means that $\tilde{A}_s$ has non-positive inverse $\tilde{A}^{-1}\leq 0$; therefore, all proofs that follow (Theorem \ref{DiscreteProbDef_UniqSol} and Sections \ref{Section_PenalisationDiscrProb} and \ref{SectionModNewton}) can be shown to hold for both situations as soon as they hold for one.
However, for the sake of simplicity, we solely focus on situation \eqref{DiscreteProbDef_Eq1}.
\end{remark}

\begin{theorem}\label{DiscreteProbDef_UniqSol}
There exists a unique solution to Problem \ref{DiscreteProbDef}.
\end{theorem}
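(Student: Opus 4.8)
The plan is to prove uniqueness and existence separately, the common engine being the following observation: any matrix obtained by picking, for each row index $i\in\mathcal N$, the $i$-th row of one of the $A_s$ again has non-positive off-diagonal entries, a positive diagonal and non-negative row sums, and — because by hypothesis the strictly positive row sum sits in the \emph{same} location for every $s\in\mathbb S$ — at least one strictly positive row sum; such a matrix is therefore again an M-matrix and in particular has a non-negative inverse. I will refer to such matrices (and the analogously assembled right-hand sides) as ``patchwork'' matrices $A_\sigma$, $b_\sigma$ associated with a row-wise control selection $\sigma:\mathcal N\to\mathbb S$.

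For \emph{uniqueness} I would suppose $x,y\in\mathbb R^N$ both solve \eqref{DiscreteProbDef_Eq1}. For each $i\in\mathcal N$ I choose $\sigma(i)\in\argmin_{s\in\mathbb S}(A_sy-b_s)_i$ and form the patchwork pair $\hat A=A_\sigma$, $\hat b=b_\sigma$, so that $\hat Ay-\hat b=\min\{A_sy-b_s:s\in\mathbb S\}=0$ by the choice of $\sigma$, while $\hat Ax-\hat b\geq\min\{A_sx-b_s:s\in\mathbb S\}=0$. Subtracting gives $\hat A(x-y)\geq 0$, and since $\hat A^{-1}\geq 0$ this yields $x-y=\hat A^{-1}\hat A(x-y)\geq 0$. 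Interchanging the roles of $x$ and $y$ gives $y-x\geq 0$ as well, hence $x=y$.

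For \emph{existence} I would argue by policy iteration. Pick any $\sigma^0:\mathcal N\to\mathbb S$ and set $x^0:=A_{\sigma^0}^{-1}b_{\sigma^0}$ (well defined, $A_{\sigma^0}$ being an M-matrix); inductively, given $x^k$, choose $\sigma^{k+1}(i)\in\argmin_{s\in\mathbb S}(A_sx^k-b_s)_i$ for every $i$ and set $x^{k+1}:=A_{\sigma^{k+1}}^{-1}b_{\sigma^{k+1}}$. By construction $A_{\sigma^{k+1}}x^k-b_{\sigma^{k+1}}=\min\{A_sx^k-b_s:s\in\mathbb S\}\leq A_{\sigma^k}x^k-b_{\sigma^k}=0$, so $A_{\sigma^{k+1}}(x^k-x^{k+1})\leq 0$ and, multiplying by $A_{\sigma^{k+1}}^{-1}\geq 0$, $x^{k+1}\geq x^k$. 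Thus $(x^k)_{k\geq 0}$ is non-decreasing; it takes values in the finite set $\{A_\sigma^{-1}b_\sigma:\sigma:\mathcal N\to\mathbb S\}$, and a non-decreasing sequence in a finite partially ordered set is eventually constant, so there is a $k$ with $x^{k+1}=x^k=:x^*$. For this $k$, $\min\{A_sx^*-b_s:s\in\mathbb S\}=A_{\sigma^{k+1}}x^k-b_{\sigma^{k+1}}=A_{\sigma^{k+1}}x^{k+1}-b_{\sigma^{k+1}}=0$, so $x^*$ solves Problem \ref{DiscreteProbDef}.

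The one genuinely delicate point is the repeated claim that the patchwork matrices $A_\sigma$ are still M-matrices: this is exactly where the hypothesis that the strictly positive row sum is located in the same row for all $s\in\mathbb S$ is indispensable — otherwise a selection could combine, in the single row where a given $A_s$ is strictly row-dominant, a row of some $A_{s'}$ with zero sum, and the resulting matrix could be singular. Everything else reduces to the two monotonicity inequalities above together with the elementary finiteness argument, so I expect the write-up to be short; the number of iterations until $x^{k+1}=x^k$ is precisely the finite-termination statement analysed later for the iterative scheme of Section \ref{SectionModNewton}.
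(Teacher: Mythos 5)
Your uniqueness argument is exactly the paper's: select, for each row $i$, a control at which the minimum is attained for one solution, assemble the ``patchwork'' M-matrix $\hat A$, deduce $\hat A(x-y)\geq 0$, multiply by $\hat A^{-1}\geq 0$, and swap. The existence half, however, takes a genuinely different route. The paper does not prove existence here at all: it defers it to Corollary \ref{PenaltyConvergenceToTrueSol}, where the solution is obtained as the $\rho\to\infty$ limit of the penalised problems, whose own solvability is in turn deferred to Theorem \ref{NewtonLimitSolvesPenProblem}; so in the paper existence rests on a chain of forward references through the whole penalty machinery. You instead give a direct policy-iteration construction --- monotone increase via the M-matrix property, finite termination because the iterates live in the finite set $\{A_\sigma^{-1}b_\sigma\}$, and verification that the fixed point attains the minimum --- which is essentially the argument the paper itself gives only in Appendix \ref{Appendix_PolicyIteration} for Proposition \ref{PolicyIt_FiniteTermination}, there tailored to the Black--Scholes discretisation rather than used for Theorem \ref{DiscreteProbDef_UniqSol}. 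Your version is self-contained at this point in the development and uses nothing beyond the structural assumptions of Problem \ref{DiscreteProbDef}; the paper's version buys the economy of making existence a byproduct of results it needs anyway. Both arguments are correct (your observation that the ``equally located positive row sum'' hypothesis is what keeps the patchwork matrices nonsingular is precisely the point the paper relies on throughout), so there is no gap --- only a reorganisation that arguably improves on the paper's logical ordering.
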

\begin{proof}
The proof of existence will be given in Corollary \ref{PenaltyConvergenceToTrueSol}.
Suppose we have two solutions $x_1$ and $x_2$\,.
For every $i\in\mathcal{N}$, there exists an $s_i\in\mathbb{S}$ such that
\begin{equation*}
(A_{s_i})_i\,x_2-(b_{s_i})_i=0,
\end{equation*}
and it also is
\begin{equation*}
(A_{s_i})_i\,x_1-(b_{s_i})_i\geq0.
\end{equation*}
Denote by $A^*\in\mathbb{R}^{N\times N}$ the matrix consisting of the rows $(A_{s_i})_i$\,, $i\in\mathcal{N}$, which is an M-matrix by our assumptions on the discretisation (cf.\,Problem\,\ref{DiscreteProbDef}). We have
\begin{equation*}
A^*(x_1-x_2)\geq 0,
\end{equation*}
and $x_1-x_2\geq 0$ follows since $(A^*)^{-1}\geq 0$; swapping the arguments, we get $x_2-x_1\geq 0$, which then proves the theorem.
\end{proof}

\section{Penalisation of the Discrete Problem}\label{Section_PenalisationDiscrProb}

In this section, we discuss how the nonlinear discrete system in Problem \ref{DiscreteProbDef} can be approximated by another (also nonlinear) problem using a penalisation technique, and it will turn out that the penalised formulation can very efficiently be solved numerically.
Our penalty approximation to the discrete HJB equation can be seen as an extension of the penalty scheme for American options as introduced in \cite{ForsythQuadraticConvergence}.

\subsection{The Penalised Problem} Our penalty approximation looks as follows.

\begin{prob}\label{PenDiscreteProbDef}
Let $s_0\in\mathbb{S}$, $\rho>0$. Find $x_{\rho}\in\mathbb{R}^N$ such that
\begin{equation}
(A_{s_0}\,x_{\rho} - b_{s_0}) -\rho \sum_{s\in\mathbb{S}\backslash\{s_0\}}\max\{b_s-A_s\,x_{\rho}\,,0\} =0.\label{PenDiscreteProbDef_Eq1}
\end{equation}
\end{prob}

In this, $\rho>0$ is called the penalty parameter, and \eqref{PenDiscreteProbDef_Eq1} can be thought of as applying a penalty of intensity $\rho$ whenever an expression $A_s-b_s\,x_{\rho}$\,, $s\in\mathbb{S}\backslash\{s_0\}$, turns negative; by pushing up the expression $\rho \sum_{s\in\mathbb{S}\backslash\{s_0\}}\max\{b_s-A_s\,x_{\rho}\,,0\}$, the penalty parameter causes $(A_{s_0}\,x_{\rho} - b_{s_0})$ and, implicitly, $x_{\rho}$ to grow until no constraints are violated anymore.

\begin{theorem}\label{PenDiscreteProbDef_Uniq}
There exists a unique solution to Problem \ref{PenDiscreteProbDef}.
\end{theorem}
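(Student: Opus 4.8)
The plan is to split the argument into existence and uniqueness, following the template of Theorem \ref{DiscreteProbDef_UniqSol}. For \emph{uniqueness}, suppose $x_\rho$ and $y_\rho$ both solve \eqref{PenDiscreteProbDef_Eq1}; subtract the two equations. The linear part contributes $A_{s_0}(x_\rho - y_\rho)$, while the penalty part contributes $\rho\sum_{s\in\mathbb{S}\setminus\{s_0\}}\bigl(\max\{b_s - A_s y_\rho,0\} - \max\{b_s - A_s x_\rho,0\}\bigr)$. The key observation is row-wise: for each index $i$ and each $s$, $\max\{(b_s - A_s y_\rho)_i,0\} - \max\{(b_s - A_s x_\rho)_i,0\} = \lambda_{s,i}\,(A_s(x_\rho - y_\rho))_i$ for some $\lambda_{s,i}\in[0,1]$ (the difference quotient of $t\mapsto\max\{t,0\}$). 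Hence there is a diagonal matrix $D_s$ with entries in $[0,1]$ such that the difference of the penalty terms equals $-\rho\sum_{s\neq s_0} D_s A_s(x_\rho-y_\rho)$, and the subtracted equation reads $\bigl(A_{s_0} + \rho\sum_{s\neq s_0} D_s A_s\bigr)(x_\rho - y_\rho) = 0$. I would then argue that the matrix $M := A_{s_0} + \rho\sum_{s\neq s_0} D_s A_s$ is an M-matrix (or at least nonsingular with nonnegative inverse): a nonnegative combination of the $A_s$ preserves non-positive off-diagonal entries and positive diagonal entries, and the row-sum condition (each $A_s$ has non-negative row sums and the positive row sum is in the same location for all $s$) guarantees $M$ has non-negative row sums with at least one strictly positive — the coefficient of $A_{s_0}$ is $1>0$. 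Therefore $M^{-1}\geq 0$, $M$ is nonsingular, and $x_\rho = y_\rho$.

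For \emph{existence}, I would set up a fixed-point / monotone iteration. Define, for a current iterate $x^k$, the diagonal matrix $P^k_s$ whose $i$-th diagonal entry is $1$ if $(b_s - A_s x^k)_i > 0$ and $0$ otherwise, and solve the linear system
\begin{equation*}
\Bigl(A_{s_0} + \rho\sum_{s\neq s_0} P^k_s A_s\Bigr) x^{k+1} = b_{s_0} + \rho\sum_{s\neq s_0} P^k_s b_s
\end{equation*}
for $x^{k+1}$; the coefficient matrix is an M-matrix by the argument above, so this is well-posed. One then shows this iteration is well-defined and converges — e.g.\ by exhibiting monotonicity of the iterates together with a bound, or by recognising \eqref{PenDiscreteProbDef_Eq1} as $F(x_\rho)=0$ for a continuous, (off-diagonally) monotone, coercive map $F$ and invoking a degree-theoretic or Brouwer-type argument on a large ball. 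Since the paper develops precisely such an iteration in Section \ref{SectionModNewton} and proves it terminates, the cleanest route is to \emph{defer} the existence claim to that section (mirroring how Theorem \ref{DiscreteProbDef_UniqSol} defers to Corollary \ref{PenaltyConvergenceToTrueSol}), i.e.\ simply note that Algorithm (D) produces a solution, and give the self-contained uniqueness proof here.

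The main obstacle is the existence half, and specifically establishing that the natural penalty iteration is globally convergent (or that $F$ is coercive) rather than merely locally well-posed: one must control $\|x_\rho\|$ uniformly to run a compactness argument, and the non-smoothness of $\max\{\cdot,0\}$ means the usual Newton-type convergence theory does not apply off the shelf — this is exactly why the paper isolates the finite-termination analysis in its own section. The uniqueness half, by contrast, is routine once the $[0,1]$-diagonal linearisation of the $\max$ is written down and the M-matrix closure property under nonnegative combinations (with the shared positive-row-sum location) is invoked.
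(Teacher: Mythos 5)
Your proof is correct and follows the paper's overall skeleton --- existence is deferred to the iterative scheme of Section \ref{SectionModNewton} (Theorem \ref{NewtonLimitSolvesPenProblem}), exactly as the paper does, and uniqueness is obtained by subtracting the two penalised equations and reducing to an M-matrix system --- but your treatment of the penalty difference is a genuinely different and somewhat cleaner decomposition. The paper runs a four-way case analysis on the signs of $(b_r-A_r x_{\rho,1})_i$ and $(b_r-A_r x_{\rho,2})_i$, discards the non-positive leftover terms, and thereby only obtains the one-sided inequality $A^*(x_{\rho,1}-x_{\rho,2})\geq 0$ for an M-matrix $A^*$ built by adding full rows $\rho\,(A_r)_i$ (or nothing) to the rows of $A_{s_0}$; it must then swap the roles of the two solutions to conclude equality. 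Your difference-quotient identity $\max\{a,0\}-\max\{b,0\}=\lambda(a-b)$ with $\lambda\in[0,1]$ packages all four cases at once and yields the exact linear system $\bigl(A_{s_0}+\rho\sum_{s\neq s_0}D_sA_s\bigr)(x_\rho-y_\rho)=0$, from which uniqueness follows in a single step from nonsingularity, with no symmetry argument needed; the price is that you must check the M-matrix closure property for arbitrary row coefficients in $[0,\rho]$ rather than just $\{0,\rho\}$, which your verification of the sign pattern and of the shared positive-row-sum location does correctly, and at the same level of rigour as the paper itself (which likewise takes for granted that its structural conditions imply invertibility). The only blemish is a sign slip: you state that the penalty difference equals $-\rho\sum_{s\neq s_0}D_sA_s(x_\rho-y_\rho)$, which contradicts your (correct) final displayed equation; the plus sign is the right one.
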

\begin{proof}
The proof of existence will be given in Theorem \ref{NewtonLimitSolvesPenProblem}.
Suppose we have two solutions $x_{\rho,1}$ and $x_{\rho,2}$\,.
It is
\begin{equation}
(A_{s_0}\,x_{\rho,1} - b_{s_0}) -\rho \sum_{s\in\mathbb{S}\backslash\{s_0\}}\max\{b_s-A_s\,x_{\rho,1}\,,0\} =0\label{PenDiscreteProbDef_Uniq_Eq1}
\end{equation}
and
\begin{equation}
(A_{s_0}\,x_{\rho,2} - b_{s_0}) -\rho \sum_{s\in\mathbb{S}\backslash\{s_0\}}\max\{b_s-A_s\,x_{\rho,2}\,,0\} =0,\label{PenDiscreteProbDef_Uniq_Eq2}
\end{equation}
and, subtracting \eqref{PenDiscreteProbDef_Uniq_Eq2} from \eqref{PenDiscreteProbDef_Uniq_Eq1}, we get
\begin{multline}
A_{s_0}(x_{\rho,1} - x_{\rho,2}) -\rho \sum_{s\in\mathbb{S}\backslash\{s_0\}}\max\{b_s-A_s\,x_{\rho,1}\,,0\} \\
+\rho \sum_{s\in\mathbb{S}\backslash\{s_0\}}\max\{b_s-A_s\,x_{\rho,2}\,,0\}=0.\label{PenDiscreteProbDef_Uniq_Eq2.8}
\end{multline}
Now, let $i\in\mathcal{N}$ and $r\in\mathbb{S}\backslash\{s_0\}$, and consider
\begin{multline}
 \big(-\rho\max\{b_r-A_r\,x_{\rho,1}\,,0\} + \rho\max\{b_r-A_r\,x_{\rho,2}\,,0\}\big)_i\\
 = -\rho\max\{(b_r)_i-(A_r\,x_{\rho,1})_i\,,0\} + \rho\max\{(b_r)_i-(A_r\,x_{\rho,2})_i\,,0\},\label{PenDiscreteProbDef_Uniq_Eq3}
\end{multline}
where we distinguish between the following different cases.
\begin{itemize}
\item[(i)] If $\max\{(b_r)_i-(A_r\,x_{\rho,1})_i\,,0\} = \max\{(b_r)_i-(A_r\,x_{\rho,2})_i\,,0\}=0$, then expression \eqref{PenDiscreteProbDef_Uniq_Eq3}
is identical to zero.
\item[(ii)] If $\max\{(b_r)_i-(A_r\,x_{\rho,1})_i\,,0\}>0$ and $\max\{(b_r)_i-(A_r\,x_{\rho,2})_i\,,0\}>0$, then expression 
\eqref{PenDiscreteProbDef_Uniq_Eq3} can be written as
$\rho(A_r\,x_{\rho,1})_i -\rho (A_r\,x_{\rho,2})_i = \big(\rho A_r(x_{\rho,1} -x_{\rho,2})\big)_i$\,.
\item[(iii)] If $\max\{(b_r)_i-(A_r\,x_{\rho,1})_i\,,0\}>0$ and $\max\{(b_r)_i-(A_r\,x_{\rho,2})_i\,,0\}=0$, then expression
\eqref{PenDiscreteProbDef_Uniq_Eq3} is negative, i.e.
$-\rho(b_r)_i + \rho(A_r\,x_{\rho,1})_i  < 0$.
\item[(iv)] If $\max\{(b_r)_i-(A_r\,x_{\rho,1})_i\,,0\}=0$ and $\max\{(b_r)_i-(A_r\,x_{\rho,2})_i\,,0\}>0$, then expression
\eqref{PenDiscreteProbDef_Uniq_Eq3} equals
\begin{multline*}
\rho(b_r)_i - \rho(A_r\,x_{\rho,2})_i\\
= \rho(b_r)_i - \rho(A_r\,x_{\rho,2})_i
 + \rho(A_r\,x_{\rho,1})_i - \rho(b_r)_i - \big(\rho(A_r\,x_{\rho,1})_i - \rho(b_r)_i\big)\\
=  \rho \big(A_r(x_{\rho,1}-x_{\rho,2})\big)_i + \rho(b_r)_i - \rho(A_r\,x_{\rho,1})_i\,,
\end{multline*}
in which $\rho(b_r)_i - \rho(A_rx_{\rho,1})_i\leq 0$.
\end{itemize}

Now, noting that the M-matrix property of $A_{s_0}$ is preserved if we add rows of
the matrices $A_s$\,, $s\in\mathbb{S}\backslash\{s_0\}$, to some of its rows, expression \eqref{PenDiscreteProbDef_Uniq_Eq2.8} implies, in the light of the four cases discussed above, that
there exists an M-matrix $A^*\in\mathbb{R}^{N\times N}$ satisfying
\begin{equation*}
A^*(x_{\rho,1} - x_{\rho,2}) \geq 0,
\end{equation*}
which then gives $(x_{\rho,1} - x_{\rho,2}) \geq 0$ since $(A^*)^{-1}\geq 0$; by the same reasoning, we can also get $(x_{\rho,2} - x_{\rho,1}) \geq 0$, which then completes the proof.
\end{proof}

\subsection{Properties of the Penalised Problem}

What follows are properties of the penalised problem. We prove that
the penalty approximation converges to the original problem as $\rho\to\infty$ and provide an estimate for the penalisation error.

\begin{la}\label{PenDiscreteProbDef_BoundedIndepRho}
Let $x_{\rho}$ be the solution to Problem \ref{PenDiscreteProbDef}.
There exists a constant $C>0$, independent of $\rho$, such that $\|x_{\rho}\|_{\infty}\leq C$.
\end{la}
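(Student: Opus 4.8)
The plan is to trap $x_\rho$ between two fixed vectors, both built from the data $\{A_s,b_s\}_{s\in\mathbb{S}}$ and both independent of $\rho$, using only the sign of the penalty term in \eqref{PenDiscreteProbDef_Eq1} together with the M-matrix structure assumed in Problem \ref{DiscreteProbDef}.

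\textbf{Lower bound.} The penalty term $\rho\sum_{s\in\mathbb{S}\backslash\{s_0\}}\max\{b_s-A_sx_\rho,0\}$ is componentwise non-negative, so \eqref{PenDiscreteProbDef_Eq1} forces $A_{s_0}x_\rho-b_{s_0}\geq 0$; since $A_{s_0}$ is an M-matrix and hence $A_{s_0}^{-1}\geq 0$, this yields $x_\rho\geq A_{s_0}^{-1}b_{s_0}$, a bound independent of $\rho$.

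\textbf{Upper bound.} I would read \eqref{PenDiscreteProbDef_Eq1} row by row for the (fixed) solution $x_\rho$, and construct a matrix $\hat A$ and a vector $\hat b$ as follows. For $i\in\mathcal{N}$: if $\max\{(b_s)_i-(A_sx_\rho)_i,0\}=0$ for every $s\in\mathbb{S}\backslash\{s_0\}$, the $i$-th row of \eqref{PenDiscreteProbDef_Eq1} reads $(A_{s_0})_i x_\rho=(b_{s_0})_i$, and I set $(\hat A)_i:=(A_{s_0})_i$, $(\hat b)_i:=(b_{s_0})_i$; otherwise there is an $r(i)\in\mathbb{S}\backslash\{s_0\}$ with $(b_{r(i)})_i-(A_{r(i)}x_\rho)_i>0$, i.e. $(A_{r(i)})_i x_\rho<(b_{r(i)})_i$, and I set $(\hat A)_i:=(A_{r(i)})_i$, $(\hat b)_i:=(b_{r(i)})_i$. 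In either case $(\hat A)_i x_\rho\leq(\hat b)_i$, hence $\hat A x_\rho\leq\hat b$. Since $\hat A$ is assembled from rows of the matrices $A_s$, it again has non-positive off-diagonal entries, positive diagonal entries, non-negative row sums, and a positive row sum in the distinguished row (this is where it matters that the positive row sum is located in the \emph{same} row of every $A_s$); thus $\hat A$ is an M-matrix, $\hat A^{-1}\geq 0$, and multiplying $\hat A x_\rho\leq\hat b$ by $\hat A^{-1}$ gives $x_\rho\leq\hat A^{-1}\hat b$.

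\textbf{Conclusion.} Combining the two bounds, $A_{s_0}^{-1}b_{s_0}\leq x_\rho\leq\hat A^{-1}\hat b$ componentwise, so $\|x_\rho\|_\infty\leq\max\{\|A_{s_0}^{-1}b_{s_0}\|_\infty,\|\hat A^{-1}\hat b\|_\infty\}$. The pair $(\hat A,\hat b)$ depends on $\rho$, but it always lies in the finite collection of pairs obtained by choosing, for each row $i$, one of the finitely many rows $(A_s)_i$ and the matching entry $(b_s)_i$, $s\in\mathbb{S}$ (each such matrix being an invertible M-matrix, and the pair $(A_{s_0},b_{s_0})$ belonging to it). Taking $C$ to be the maximum of $\|\mathbf{A}^{-1}\mathbf{b}\|_\infty$ over this finite collection then gives $\|x_\rho\|_\infty\leq C$ with $C$ independent of $\rho$.

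The only genuine obstacle is the verification that the mixed matrix $\hat A$ inherits all the hypotheses of Problem \ref{DiscreteProbDef}, and in particular retains a non-negative inverse; this works precisely because those hypotheses are imposed row-wise and the positive-row-sum row is common to all $A_s$. Everything else — the case distinction for each row, and the monotonicity steps $A_{s_0}^{-1}\geq 0$ and $\hat A^{-1}\geq 0$ — is routine once that point is in place.
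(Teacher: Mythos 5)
Your proposal is correct and follows essentially the same route as the paper: the lower bound comes from the non-negativity of the penalty term together with $A_{s_0}^{-1}\geq 0$, and the upper bound comes from assembling a row-wise ``mixed'' M-matrix $\hat A$ (the paper's $A^*$) with $\hat A x_\rho\leq\hat b$ and invoking the finiteness of the possible compositions. Your explicit verification that $\hat A$ inherits the M-matrix hypotheses row-wise is a point the paper merely asserts, but the argument is the same.
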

\begin{proof}
We rewrite equation \eqref{PenDiscreteProbDef_Eq1} to get
\begin{equation}
A_{s_0}\,x_{\rho} =\rho \sum_{s\in\mathbb{S}\backslash\{s_0\}}\max\{b_s-A_s\,x_{\rho}\,,0\} + b_{s_0}.\label{PenDiscreteProbDef_BoundedIndepRho_Eq1}
\end{equation}
Hence, for every component $i\in\mathcal{N}$, we have that
\begin{equation*}
(A_{s_0}\,x_{\rho})_i= (b_{s_0})_i\quad\text{or}\quad\exists\ s\in\mathbb{S}\backslash\{s_0\}\text{ s.t. }(A_s\,x_{\rho})_i < (b_s)_i\,.
\end{equation*}
Therefore, we may, for $i\in\mathcal{N}$, define $s_i\in\mathbb{S}$ to be such that $(A_{s_i}\,x_{\rho})_i\leq (b_{s_i})_i$
is satisfied, and, introducing $A^*\in\mathbb{R}^{N\times N}$ to be the matrix having as $i$-th row the $i$-th row of $A_{s_i}$\,, $i\in\mathcal{N}$, and defining $b^*\in\mathbb{R}^N$ correspondingly, it is
\begin{equation*}
A^*\,x_{\rho}\leq b^*.
\end{equation*}
Since $A^*$ is an M-matrix, it is $(A^*)^{-1}\geq 0$, and we get $x_{\rho}\leq (A^*)^{-1}b^*$,
in which $(A^*)^{-1}$ and $b^*$ can be bounded independently of $\rho$ because there are only finitely many compositions that can be assumed.
To see that the negative part of $x_{\rho}$ is bounded as well, we note that from
\eqref{PenDiscreteProbDef_BoundedIndepRho_Eq1} we have
\begin{equation*}
x_{\rho} \geq -(A_{s_0})^{-1}b_{s_0}\,,
\end{equation*}
in which $(A_{s_0})^{-1}\geq 0$.
\end{proof}

\begin{la}\label{PenaltyApproxAccuracy_Theorem}
Let $x_{\rho}$ be the solution to Problem \ref{PenDiscreteProbDef}. There exists a constant $C>0$, independent of $\rho$, such that
\begin{align}
A_{s_0}\,x_{\rho}-b_{s_0}\geq&\ 0,\label{PenaltyApproxAccuracy_Theorem_Eq0.1}\\
A_s\,x_{\rho}-b_s\geq& - \frac{C}{\rho},\quad s\in\mathbb{S}\backslash\{s_0\},\label{PenaltyApproxAccuracy_Theorem_Eq0.2}
\end{align}
and, for every $i\in\mathcal{N}$, it is
\begin{equation}
(A_{s_0}\,x_{\rho}-b_{s_0})_i=0\label{PenaltyApproxAccuracy_Theorem_Eq0.3}
\end{equation}
or $\exists\ s\in\mathbb{S}\backslash\{s_0\}$ such that
\begin{equation}
(A_s\,x_{\rho}-b_s)_i \leq \frac{C}{\rho}.\label{PenaltyApproxAccuracy_Theorem_Eq0.4}
\end{equation}
Combined, properties \eqref{PenaltyApproxAccuracy_Theorem_Eq0.1}--\eqref{PenaltyApproxAccuracy_Theorem_Eq0.4} imply that $\|\min\{A_s\,x_{\rho}-b_s : s\in\mathbb{S}\}\|_{\infty}\leq C/\rho$.
\end{la}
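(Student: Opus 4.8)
The plan is to read off all four properties directly from the penalty equation in the rearranged form \eqref{PenDiscreteProbDef_BoundedIndepRho_Eq1}, namely
\[
A_{s_0}\,x_{\rho}-b_{s_0}=\rho\sum_{s\in\mathbb{S}\backslash\{s_0\}}\max\{b_s-A_s\,x_{\rho}\,,0\}.
\]
The right-hand side is $\rho>0$ times a sum of componentwise non-negative vectors, so \eqref{PenaltyApproxAccuracy_Theorem_Eq0.1} is immediate.

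Next I would invoke Lemma \ref{PenDiscreteProbDef_BoundedIndepRho} to obtain a constant $C_0>0$, independent of $\rho$, with $\|x_{\rho}\|_{\infty}\leq C_0$; the triangle inequality then gives $\|A_{s_0}\,x_{\rho}-b_{s_0}\|_{\infty}\leq C_1$ for some $C_1>0$ also independent of $\rho$. The key observation is that, because every summand in the displayed identity is non-negative, the whole sum dominates each individual term: for every $i\in\mathcal{N}$ and every $r\in\mathbb{S}\backslash\{s_0\}$,
\[
0\leq\rho\max\{(b_r-A_r\,x_{\rho})_i\,,0\}\leq(A_{s_0}\,x_{\rho}-b_{s_0})_i\leq C_1,
\]
hence $\max\{(b_r-A_r\,x_{\rho})_i\,,0\}\leq C_1/\rho$, i.e. $(A_r\,x_{\rho}-b_r)_i\geq -C_1/\rho$. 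Setting $C:=C_1$ gives \eqref{PenaltyApproxAccuracy_Theorem_Eq0.2}.

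For \eqref{PenaltyApproxAccuracy_Theorem_Eq0.3}--\eqref{PenaltyApproxAccuracy_Theorem_Eq0.4} I would fix $i\in\mathcal{N}$ and argue by cases on the displayed identity: if $(A_{s_0}\,x_{\rho}-b_{s_0})_i=0$ we are in case \eqref{PenaltyApproxAccuracy_Theorem_Eq0.3}; otherwise the sum $\sum_{s}\max\{(b_s-A_s\,x_{\rho})_i\,,0\}$ is strictly positive, so some summand, say the one for $s=r$, is strictly positive, and then $(A_r\,x_{\rho}-b_r)_i<0\leq C/\rho$, which is \eqref{PenaltyApproxAccuracy_Theorem_Eq0.4}. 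For the final combined estimate, note that for each $i$ the number $\min\{(A_s\,x_{\rho}-b_s)_i:s\in\mathbb{S}\}$ is $\geq -C/\rho$ by \eqref{PenaltyApproxAccuracy_Theorem_Eq0.1} together with \eqref{PenaltyApproxAccuracy_Theorem_Eq0.2}, and $\leq C/\rho$ by \eqref{PenaltyApproxAccuracy_Theorem_Eq0.3}--\eqref{PenaltyApproxAccuracy_Theorem_Eq0.4}; taking the supremum over $i$ yields $\|\min\{A_s\,x_{\rho}-b_s:s\in\mathbb{S}\}\|_{\infty}\leq C/\rho$.

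There is no genuine obstacle here; the one step that deserves care is the middle one, where uniform-in-$\rho$ boundedness of $x_{\rho}$ (hence of $A_{s_0}x_{\rho}-b_{s_0}$), combined with the fact that every term of the penalty sum is non-negative, is precisely what forces each violated constraint to be of size $O(1/\rho)$ — and all bounds are uniform in $\rho$ because, by Lemma \ref{PenDiscreteProbDef_BoundedIndepRho} and finiteness of $\mathbb{S}$, only finitely many row-compositions can occur.
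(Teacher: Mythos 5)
Your proposal is correct and follows essentially the same route as the paper's proof: rearranging the penalty equation, using the non-negativity of each summand so that the bounded left-hand side $A_{s_0}x_{\rho}-b_{s_0}$ (via Lemma \ref{PenDiscreteProbDef_BoundedIndepRho}) dominates each term $\rho\max\{b_s-A_s x_{\rho},0\}$, and then a componentwise case distinction for \eqref{PenaltyApproxAccuracy_Theorem_Eq0.3}--\eqref{PenaltyApproxAccuracy_Theorem_Eq0.4}. Your write-up is in fact slightly more explicit than the paper's about property \eqref{PenaltyApproxAccuracy_Theorem_Eq0.1} and the final combined estimate, which the paper leaves to the reader.
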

\begin{proof}
For $s\in\mathbb{S}\backslash\{s_0\}$, it is
\begin{align*}
\rho(b_s - A_s\,x_{\rho})\leq&\ \rho \sum_{r\in\mathbb{S}\backslash\{s_0\}}\max\{b_r-A_r\,x_{\rho}\,,0\}\\
=&\ A_{s_0}\,x_{\rho} - b_{s_0}\,,
\end{align*}
and, applying Lemma \ref{PenDiscreteProbDef_BoundedIndepRho} to the last expression, we get that
\begin{equation}
A_s\, x_{\rho}-b_s \geq -\frac{C}{\rho}\label{PenaltyApproxAccuracy_Theorem_Eq1}
\end{equation}
for some constant $C>0$ independent of $\rho$.
Furthermore, we have that, for every component $i\in\mathcal{N}$, it is
\begin{equation*}
(A_{s_0}\,x_{\rho}-b_{s_0})_i=0
\end{equation*}
or $\exists\ s\in\mathbb{S}\backslash\{s_0\}$ such that
\begin{equation*}
(b_s-A_s\,x_{\rho})_i \geq 0,
\end{equation*}
which, together with \eqref{PenaltyApproxAccuracy_Theorem_Eq1}, extends to
\begin{equation*}
|(A_sx-b_s)_i| \leq \frac{C}{\rho}.
\end{equation*}
\end{proof}

\begin{cor}\label{PenaltyConvergenceToTrueSol}
Let $x_{\rho}$ be the solution to Problem \ref{PenDiscreteProbDef}.
As $\rho\to\infty$, $x_{\rho}$ converges to a limit $x^*$ which solves Problem \ref{DiscreteProbDef}.
\end{cor}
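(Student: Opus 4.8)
The plan is to combine the uniform bound of Lemma~\ref{PenDiscreteProbDef_BoundedIndepRho} with the error estimate of Lemma~\ref{PenaltyApproxAccuracy_Theorem} and a compactness argument. First I would fix an arbitrary sequence $\rho_n\to\infty$ and consider the corresponding solutions $x_{\rho_n}$ of Problem~\ref{PenDiscreteProbDef}. By Lemma~\ref{PenDiscreteProbDef_BoundedIndepRho}, the sequence $(x_{\rho_n})_n$ lies in the compact set $\{x\in\mathbb{R}^N : \|x\|_{\infty}\leq C\}$, so by the Bolzano--Weierstrass theorem it has a convergent subsequence $x_{\rho_{n_k}}\to x^*$ for some $x^*\in\mathbb{R}^N$.

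Next I would pass to the limit in the estimate of Lemma~\ref{PenaltyApproxAccuracy_Theorem}, which asserts $\|\min\{A_s\,x_{\rho_{n_k}}-b_s : s\in\mathbb{S}\}\|_{\infty}\leq C/\rho_{n_k}$. Since the map $x\mapsto\min\{A_s\,x-b_s : s\in\mathbb{S}\}$ is continuous (it is a finite minimum of affine maps) and $C/\rho_{n_k}\to 0$, letting $k\to\infty$ yields $\min\{A_s\,x^*-b_s : s\in\mathbb{S}\}=0$; that is, $x^*$ solves Problem~\ref{DiscreteProbDef}. This simultaneously supplies the existence statement that was deferred from Theorem~\ref{DiscreteProbDef_UniqSol}.

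Finally, to upgrade subsequential convergence to convergence of the whole family $x_{\rho}$ as $\rho\to\infty$, I would invoke the uniqueness part of Theorem~\ref{DiscreteProbDef_UniqSol} (whose proof does not rely on existence, so there is no circularity here): every subsequence of every sequence $x_{\rho_n}$ admits a further subsequence converging to a solution of Problem~\ref{DiscreteProbDef}, and all such solutions coincide with the unique $x^*$; by the standard subsequence principle this forces $x_{\rho_n}\to x^*$, and since $(\rho_n)$ was arbitrary, $x_{\rho}\to x^*$ as $\rho\to\infty$.

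I expect the only genuinely delicate point to be bookkeeping the logical dependencies --- ensuring we use only the uniqueness and not the (as yet unproven) existence claim of Theorem~\ref{DiscreteProbDef_UniqSol}, and that the error bound of Lemma~\ref{PenaltyApproxAccuracy_Theorem} indeed holds with a $\rho$-independent $C$ --- rather than any hard analysis; the compactness and continuity steps are entirely routine.
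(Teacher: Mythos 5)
Your proposal is correct and follows essentially the same route as the paper: boundedness from Lemma~\ref{PenDiscreteProbDef_BoundedIndepRho}, Bolzano--Weierstrass, identification of the limit via Lemma~\ref{PenaltyApproxAccuracy_Theorem}, and the uniqueness part of Theorem~\ref{DiscreteProbDef_UniqSol} (which indeed does not depend on existence) to upgrade subsequential to full convergence. The only divergence is that you pass to the limit in the combined estimate $\|\min\{A_s\,x_{\rho}-b_s : s\in\mathbb{S}\}\|_{\infty}\leq C/\rho$ using continuity of the componentwise minimum, whereas the paper argues componentwise and extracts a further subsequence over the finite control set to pin down, for each $i$, a control $s^*_i$ with $(A_{s^*_i}\,x^*-b_{s^*_i})_i=0$ --- extra structure that it reuses in the proof of Theorem~\ref{PenaltyConvergenceToTrueSol_ErrorEstimate}, but which is not needed for the corollary itself.
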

\begin{proof}
Since $(x_{\rho})_{\rho>0}$ is bounded (as seen in Lemma \ref{PenDiscreteProbDef_BoundedIndepRho}), it has a convergent subsequence, which we will not distinguish notationally; we denote the limit of this subsequence by $x^*$.
From \eqref{PenaltyApproxAccuracy_Theorem_Eq0.1} and \eqref{PenaltyApproxAccuracy_Theorem_Eq0.2}, we may deduce that 
\begin{align*}
A_s\,x^*-b_s\geq 0,\quad s\in\mathbb{S}.
\end{align*}
Now, consider a component $i\in\mathcal{N}$.
From \eqref{PenaltyApproxAccuracy_Theorem_Eq0.3} and \eqref{PenaltyApproxAccuracy_Theorem_Eq0.4}, we know that, for every $\rho>0$, there exists an $s_{i,\rho}\in\mathbb{S}$ such that
\begin{equation*}
|(A_{s_{i,\rho}}\,x_{\rho}-b_{s_{i,\rho}})_i|\leq \frac{C}{\rho}.
\end{equation*}
Recalling that $|\mathbb{S}|<\infty$, we can then infer that there exists once more a subsequence of $(x_{\rho})_{\rho> 0}$\,,
again not notationally distinguished, and an $s^*_i\in\mathbb{S}$ such that
\begin{equation}
|(A_{s^*_i}\,x_{\rho}-b_{s^*_i})_i|\leq \frac{C}{\rho}\quad\forall\ \rho>0,\label{PenaltyConvergenceToTrueSol_Eq1}
\end{equation}
which means
\begin{equation}
|(A_{s^*_i}\,x^*-b_{s^*_i})_i|=0.\label{PenaltyConvergenceToTrueSol_Eq2}
\end{equation}
Altogether, we see that $x^*$ solves Problem \ref{DiscreteProbDef}.
Since Problem \ref{DiscreteProbDef} has a unique solution (cf.\,Theorem\,\ref{DiscreteProbDef_UniqSol}), the just given result holds not only for subsequences of $(x_{\rho})_{\rho>0}$\,, but the whole sequence converges.
\end{proof}

\begin{theorem}\label{PenaltyConvergenceToTrueSol_ErrorEstimate}
Let $x_{\rho}$ and $x^*$ be the solutions to Problems \ref{PenDiscreteProbDef} and \ref{DiscreteProbDef}, respectively.
There exists a constant $C>0$, independent of $\rho$, such that
\begin{equation*}
\|x^*-x_{\rho}\|_{\infty}\leq\frac{C}{\rho}.
\end{equation*}
\end{theorem}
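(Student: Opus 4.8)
The plan is to bound the error $\|x^* - x_\rho\|_\infty$ by playing the two problems against each other, much as in the uniqueness proofs above, but now extracting a quantitative bound from Lemma~\ref{PenaltyApproxAccuracy_Theorem}. The key fact I want to exploit is that both $x^*$ and $x_\rho$ satisfy row-by-row relations of the form ``one inequality tight, all others $\geq 0$ (or $\geq -C/\rho$)'', and that in each case the tight rows can be assembled into an $M$-matrix whose inverse is bounded uniformly in $\rho$ (finitely many choices).

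First I would show $x_\rho - x^* \geq -\tfrac{C}{\rho}\,\mathbf{1}$, i.e.\ a one-sided bound. For each $i\in\mathcal{N}$, pick $s_i\in\mathbb{S}$ with $(A_{s_i}x^*-b_{s_i})_i = 0$ (this exists since $x^*$ solves Problem~\ref{DiscreteProbDef}), and let $A^*$ be the $M$-matrix with $i$-th row $(A_{s_i})_i$ and $b^*$ the corresponding right-hand side, so $A^*x^* = b^*$. By Lemma~\ref{PenaltyApproxAccuracy_Theorem}, $A_s x_\rho - b_s \geq -\tfrac{C}{\rho}\mathbf{1}$ for \emph{all} $s\in\mathbb{S}$ (inequality \eqref{PenaltyApproxAccuracy_Theorem_Eq0.1} covers $s_0$, \eqref{PenaltyApproxAccuracy_Theorem_Eq0.2} the rest), hence in particular $A^* x_\rho - b^* \geq -\tfrac{C}{\rho}\mathbf{1}$ row by row. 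Subtracting, $A^*(x_\rho - x^*) \geq -\tfrac{C}{\rho}\mathbf{1}$, and applying $(A^*)^{-1}\geq 0$ gives $x_\rho - x^* \geq -\tfrac{C}{\rho}(A^*)^{-1}\mathbf{1}$, whose $\infty$-norm is $\leq \tfrac{C'}{\rho}$ since $(A^*)^{-1}$ ranges over a finite set of matrices as the $s_i$ vary.

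For the reverse bound $x^* - x_\rho \geq -\tfrac{C}{\rho}\mathbf{1}$, I would instead use the structure of $x_\rho$ coming from \eqref{PenaltyApproxAccuracy_Theorem_Eq0.3}--\eqref{PenaltyApproxAccuracy_Theorem_Eq0.4}: for each $i$, either $(A_{s_0}x_\rho - b_{s_0})_i = 0$ or there is $s_i\in\mathbb{S}\backslash\{s_0\}$ with $(A_{s_i}x_\rho - b_{s_i})_i \leq \tfrac{C}{\rho}$; in both cases we may choose $s_i$ with $(A_{s_i}x_\rho - b_{s_i})_i \leq \tfrac{C}{\rho}$. Form the corresponding $M$-matrix $\hat A$ and vector $\hat b$, so $\hat A x_\rho - \hat b \leq \tfrac{C}{\rho}\mathbf{1}$. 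Since $x^*$ solves Problem~\ref{DiscreteProbDef}, $\hat A x^* - \hat b \geq 0$ row by row. Subtracting and multiplying by $\hat A^{-1}\geq 0$ yields $x^* - x_\rho \geq -\tfrac{C}{\rho}\hat A^{-1}\mathbf{1}$, again bounded in $\infty$-norm by $\tfrac{C''}{\rho}$ uniformly in $\rho$. Combining the two one-sided bounds gives $\|x^* - x_\rho\|_\infty \leq C/\rho$.

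The main obstacle, such as it is, lies in the uniformity of the constant: one must be careful that the $M$-matrices $A^*$ and $\hat A$ (and their inverses) are drawn from a fixed finite collection independent of $\rho$ — this is exactly the argument already used in Lemma~\ref{PenDiscreteProbDef_BoundedIndepRho} (``there are only finitely many compositions that can be assumed''), so it carries over verbatim. A minor point to get right is the bookkeeping of the two \emph{different} row-selection strategies (tight rows of $x^*$ for one direction, near-tight rows of $x_\rho$ for the other); choosing the wrong one in either direction breaks the sign of the residual and the argument fails. Everything else is a routine application of $(A^*)^{-1}\geq 0$.
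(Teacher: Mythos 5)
Your proof is correct, and it takes a cleaner route than the paper's. The paper derives the estimate as a continuation of Corollary \ref{PenaltyConvergenceToTrueSol}: it extracts a subsequence along which, for each component $i$, a \emph{single} control $s^*_i$ is simultaneously tight for $x^*$ (i.e.\ $(A_{s^*_i}x^*-b_{s^*_i})_i=0$) and near-tight for $x_\rho$ (i.e.\ $|(A_{s^*_i}x_\rho-b_{s^*_i})_i|\leq C/\rho$), assembles these rows into one M-matrix $A^*$, obtains the two-sided bound $|x^*-x_{\rho'}|\leq \|(A^*)^{-1}\|_\infty C/\rho'$ along the subsequence, and then upgrades to the full sequence by the ``every subsequence has a further subsequence satisfying the bound'' argument. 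You avoid the subsequence machinery entirely by splitting the estimate into two one-sided bounds, each built from a \emph{different} row selection: the rows tight for $x^*$ (combined with $A_sx_\rho-b_s\geq -C/\rho$ for all $s$, from \eqref{PenaltyApproxAccuracy_Theorem_Eq0.1}--\eqref{PenaltyApproxAccuracy_Theorem_Eq0.2}) give the upper bound on $x^*-x_\rho$, and the rows near-tight for $x_\rho$ (combined with $A_sx^*-b_s\geq 0$ for all $s$) give the lower bound. Both proofs rest on the same two pillars --- row-assembled M-matrices with non-negative inverses, and uniformity of $\|(\cdot)^{-1}\|_\infty$ over the finitely many possible compositions --- but your direct sandwich argument is self-contained for each fixed $\rho$ and dispenses with the compactness bootstrap, which is arguably a simplification. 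You are also right to flag that the two directions require the two different row-selection strategies; mixing them up would indeed destroy the sign of the residual.
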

\begin{proof}
We continue in the setup of the proof of Corollary \ref{PenaltyConvergenceToTrueSol}.
From \eqref{PenaltyConvergenceToTrueSol_Eq1} and \eqref{PenaltyConvergenceToTrueSol_Eq2}, we know that
the original sequence $(x_{\rho})_{\rho>0}$ of solutions to Problem \ref{PenDiscreteProbDef} has a subsequence
$(x_{\rho'})_{\rho'>0}$ such that, for every component $i\in\mathcal{N}$, there exists an $s^*_i\in\mathbb{S}$ satisfying
\begin{equation*}
|(A_{s^*_i})_i\,(x_{\rho'}-x^*)_i|\leq \frac{C}{\rho'}\quad\forall\ \rho'>0.
\end{equation*}
Denote the matrix consisting of the rows $(A_{s^*_i})_i$\,, $i\in\mathcal{N}$, by $A^*\in\mathbb{R}^{N\times N}$.
Noting that $A^*$ is an M-matrix and hence invertible,
we obtain that
\begin{equation}
|x^*-x_{\rho'}|\leq \frac{\|(A^*)^{-1}\|_{\infty}\,C}{\rho'}\quad\forall\ \rho'>0,\label{PenaltyConvergenceToTrueSol_ErrorEstimate_Eq1}
\end{equation}
in which $\|(A^*)^{-1}\|_{\infty}$ can be bounded independently of $\rho$ because there are only finitely many compositions it can assume. Since, by the same arguments, every subsequence of the original sequence $(x_{\rho})_{\rho>0}$ can be shown
to have a subsequence satisfying \eqref{PenaltyConvergenceToTrueSol_ErrorEstimate_Eq1}, we deduce
that the whole sequence satisfies
\begin{equation*}
|x^*-x_{\rho}|\leq\frac{\hat{C}}{\rho}\quad\forall\ \rho>0
\end{equation*}
for some constant $\hat{C}>0$ independent of $\rho$.
\end{proof}

\section{An Iterative Method for Solving the Penalised Problem}\label{SectionModNewton}

Having seen in the previous section how penalisation can be used to approximate Problem \ref{DiscreteProbDef}, it remains to answer the question how to numerically solve the nonlinear discrete system \eqref{PenDiscreteProbDef_Eq1} appearing in the penalised formulation. To this end, we present an iterative method -- distantly resembling Newton's method -- 
which has finite termination and converges to the solution to Problem \ref{PenDiscreteProbDef}.
In the situation of American options, a similar algorithm was used in \cite{ForsythQuadraticConvergence}.\bigskip

We are trying to find $y\in\mathbb{R}^N$ which satisfies \eqref{PenDiscreteProbDef_Eq1}; hence, defining
\begin{equation*}
G(y):=(A_{s_0}\,y - b_{s_0}) -\rho \sum_{s\in\mathbb{S}\backslash\{s_0\}}\max\{b_s-A_s\,y,0\},\quad y\in\mathbb{R}^N,
\end{equation*}
we need to solve $G(y)=0$.

\begin{remark}
Let $\xi$ be a vector (or a matrix) in $\mathbb{R}^N$ (or $\mathbb{R}^{N\times N}$). For $y\in\mathbb{R}^N$, we denote by $\xi^y_s$\,, $s\in\mathbb{S}\backslash\{s_0\}$, the vector (or matrix) consisting 
of
\begin{itemize}
\item all rows of $\xi$ where the corresponding rows of $b_s-A_s\,y$ are positive
\item and where all other rows are set to zero.
\end{itemize} 
\end{remark}

\begin{alg}\label{ModifiedNewtonAlg}
Define
\begin{equation*}
J_G(y):=A_{s_0} +  \rho\sum_{s\in\mathbb{S}\backslash\{s_0\}}A^y_s\,,\quad y\in\mathbb{R}^N.
\end{equation*}
Let $x^0\in\mathbb{R}^N$ be some starting value. Then, for known $x^n$, $n\geq 0$, find $x^{n+1}$ such that
\begin{equation}
J_G(x^n)(x^{n+1}-x^n)= -G(x^n).\label{ModifiedNewtonAlg_Eq1}
\end{equation}
\end{alg}
Clearly, the just introduced algorithm bears strong resemblance to Newton's method. However, since (as we will see) it converges globally in finitely many steps, it is more powerful than standard Newton iteration, for which convergence can only be guaranteed in a sufficiently small neighbourhood of the solution.
Furthermore, the function $G(\cdot)$ is not differentiable, which means standard Newton iteration is not directly applicable. For general results on extensions of Newton methods to semi-smooth Newton methods
for the solution of particular non-smooth equations, see
\cite{ItoKunisch_SemiSmoothNewton_VarIneqFirstKind,ItoKunisch_SemiSmoothNewtonAndGlobalization} and references therein.\bigskip

The algorithm is well defined since $J_G$ is a non-singular matrix. Also, equation \eqref{ModifiedNewtonAlg_Eq1} can be written as
\begin{align*}
\big(A_{s_0} + \rho\sum_{s\in\mathbb{S}\backslash\{s_0\}}A^{x^n}_s\big)(x^{n+1}-x^n)
= - (A_{s_0}x^n - b_{s_0}) +\rho \sum_{s\in\mathbb{S}\backslash\{s_0\}}(b^{x^n}_s-A^{x^n}_sx^n),
\end{align*}
which is equivalent to
\begin{align}
\big(A_{s_0} + \rho\sum_{s\in\mathbb{S}\backslash\{s_0\}}A^{x^n}_s\big)x^{n+1}
= b_{s_0} +\rho \sum_{s\in\mathbb{S}\backslash\{s_0\}} b^{x^n}_s.\label{ModifiedNewtonAlg_Eq2}
\end{align}

\begin{la}\label{NewtonAlgMonotonicity}
Let $x^0\in\mathbb{R}^N$ be some starting value, and let $(x^n)^{\infty}_{n=0}$ be the sequence generated
by Algorithm \ref{ModifiedNewtonAlg}. It is $x^n\leq x^{n+1}$ for $n\geq 1$.
\end{la}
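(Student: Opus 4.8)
The plan is to prove that $G(x^n)\le 0$ (componentwise) for every $n\ge 1$ and then exploit the M-matrix structure of $J_G(x^n)$ to deduce $x^{n+1}-x^n\ge 0$ from the update equation \eqref{ModifiedNewtonAlg_Eq1}.

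First I would record that $J_G(y)$ is an M-matrix for every $y\in\mathbb{R}^N$. The $i$-th row of $J_G(y)$ is $(A_{s_0})_i+\rho\sum_{s\in\mathbb{S}\backslash\{s_0\}}(A^y_s)_i$, and each $(A^y_s)_i$ is either zero or equal to $(A_s)_i$. Since all $A_s$ have non-positive off-diagonal entries, positive diagonal entries and non-negative row sums, and since the one row carrying a positive row sum is the same in every $A_s$ (and already present in $A_{s_0}$), the matrix $J_G(y)$ inherits all four defining properties of Problem \ref{DiscreteProbDef}; in particular $J_G(y)^{-1}\ge 0$, which also re-confirms that Algorithm \ref{ModifiedNewtonAlg} is well defined. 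It is convenient to note in addition the identity $\max\{b_s-A_s y,0\}=b^y_s-A^y_s y$, which holds row by row straight from the definition of $A^y_s$ and $b^y_s$.

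Next, fix $n\ge 1$ and use the equation that produced $x^n$ from $x^{n-1}$, i.e. \eqref{ModifiedNewtonAlg_Eq2} with $n$ replaced by $n-1$, to write
\[
A_{s_0}x^n-b_{s_0}=\rho\sum_{s\in\mathbb{S}\backslash\{s_0\}}\bigl(b^{x^{n-1}}_s-A^{x^{n-1}}_s x^n\bigr).
\]
Substituting this into the definition of $G$ gives
\[
G(x^n)=\rho\sum_{s\in\mathbb{S}\backslash\{s_0\}}\Bigl[\bigl(b^{x^{n-1}}_s-A^{x^{n-1}}_s x^n\bigr)-\max\{b_s-A_s x^n,0\}\Bigr].
\]
I would then argue componentwise for each row $i$ and each $s$: if $(b_s-A_sx^{n-1})_i>0$ the bracketed term equals $(b_s-A_sx^n)_i-\max\{(b_s-A_sx^n)_i,0\}\le 0$, while if $(b_s-A_sx^{n-1})_i\le 0$ it equals $-\max\{(b_s-A_sx^n)_i,0\}\le 0$. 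Hence $G(x^n)\le 0$.

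Finally, from \eqref{ModifiedNewtonAlg_Eq1} we have $J_G(x^n)(x^{n+1}-x^n)=-G(x^n)\ge 0$, and since $J_G(x^n)^{-1}\ge 0$ this yields $x^{n+1}-x^n\ge 0$, which is the claim. The only point requiring care is the bookkeeping with the masked quantities $A^y_s$ and $b^y_s$ — in particular keeping straight which argument, $x^{n-1}$ or $x^n$, controls the mask in each term — but once the identity $\max\{b_s-A_sy,0\}=b^y_s-A^y_sy$ is in hand, the sign of every summand falls out of the elementary two-case split above. Note that the argument genuinely uses $n\ge 1$, since it relies on $x^n$ having been generated by one step of the iteration; nothing is asserted about $x^1-x^0$.
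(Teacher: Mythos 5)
Your proof is correct and is essentially the paper's argument in a slightly repackaged form: the quantity $-G(x^n)=\rho\sum_{s\in\mathbb{S}\backslash\{s_0\}}\bigl[\max\{b_s-A_sx^n,0\}-(b^{x^{n-1}}_s-A^{x^{n-1}}_sx^n)\bigr]$ that you show is non-negative is exactly the right-hand side the paper obtains by subtracting the update equations for $x^n$ and $x^{n+1}$, and both arguments then conclude via the non-negative inverse of the M-matrix $J_G(x^n)$. Your two-case split on the sign of $(b_s-A_sx^{n-1})_i$ is just a more compact version of the paper's four-case analysis, so nothing of substance differs.
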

\begin{proof}
Writing $\eqref{ModifiedNewtonAlg_Eq2}$ for $x^n$ and $x^{n+1}$, we obtain
\begin{align*}
\big(A_{s_0} + \rho\sum_{s\in\mathbb{S}\backslash\{s_0\}}A^{x^n}_s\big)x^{n+1}
=&\ b_{s_0} +\rho \sum_{s\in\mathbb{S}\backslash\{s_0\}} b^{x^n}_s\\
\text{and}\quad
\big(A_{s_0} + \rho\sum_{s\in\mathbb{S}\backslash\{s_0\}}A^{x^{n-1}}_s\big)x^n
=&\ b_{s_0} +\rho \sum_{s\in\mathbb{S}\backslash\{s_0\}} b^{x^{n-1}}_s,
\end{align*}
and subtracting yields
\begin{multline}
\big(A_{s_0} + \rho\sum_{s\in\mathbb{S}\backslash\{s_0\}}A^{x^n}_s\big)(x^{n+1}-x^n)\\
=
\rho \sum_{s\in\mathbb{S}\backslash\{s_0\}} b^{x^n}_s
- \rho\sum_{s\in\mathbb{S}\backslash\{s_0\}}A^{x^n}_s x^n
-\rho \sum_{s\in\mathbb{S}\backslash\{s_0\}} b^{x^{n-1}}_s
+ \rho\sum_{s\in\mathbb{S}\backslash\{s_0\}}A^{x^{n-1}}_s x^n\,.\label{NewtonAlgMonotonicity_Eq1}
\end{multline}
In this, $A_{s_0} + \rho\sum_{s\in\mathbb{S}\backslash\{s_0\}}A^{x^n}_s$ is an M-matrix (and has a non-negative inverse), and, therefore, the proof is complete if we can show that the right hand side of expression \eqref{NewtonAlgMonotonicity_Eq1} is non-negative; to do this, we consider the rows $i\in\mathcal{N}$ of the right hand side of \eqref{NewtonAlgMonotonicity_Eq1} separately.
Let $s\in\mathbb{S}\backslash\{s_0\}$.
\begin{itemize}
 \item[(i)] If $(A^{x^n}_s)_i = 0$ and $(A^{x^{n-1}}_s)_i= 0$, it is
$\rho (b^{x^n}_s)_i
- \rho (A^{x^n}_s x^n)_i
-\rho (b^{x^{n-1}}_s)_i
+ \rho (A^{x^{n-1}}_s x^n)_i= 0$.
\item[(ii)] If $ (A^{x^n}_s)_i > 0$ and $(A^{x^{n-1}}_s)_i > 0$, it is
$ (A^{x^n}_s)_i = (A^{x^{n-1}}_s)_i$ and $(b^{x^n}_s)_i = (b^{x^{n-1}}_s)_i$\,, and, hence, it is $\rho (b^{x^n}_s)_i
- \rho (A^{x^n}_s x^n)_i
-\rho (b^{x^{n-1}}_s)_i
+ \rho (A^{x^{n-1}}_s x^n)_i= 0$.
\item[(iii)] If $ (A^{x^n}_s)_i > 0$ and $(A^{x^{n-1}}_s)_i = 0$, it is
$\rho (b^{x^n}_s)_i
- \rho (A^{x^n}_s x^n)_i
-\rho (b^{x^{n-1}}_s)_i
+ \rho (A^{x^{n-1}}_s x^n)_i
= \rho (b^{x^n}_s)_i
- \rho (A^{x^n}_s x^n)_i\geq 0.$
\item[(iv)]  If $(A^{x^n}_s)_i = 0$ and $(A^{x^{n-1}}_s)_i > 0$, it is
$\rho (b^{x^n}_s)_i
- \rho (A^{x^n}_s x^n)_i
-\rho (b^{x^{n-1}}_s)_i
+ \rho (A^{x^{n-1}}_s x^n)_i
= -\rho (b^{x^{n-1}}_s)_i
+ \rho (A^{x^{n-1}}_s x^n)_i\geq 0$, where the last inequality holds because $\max\{(b_s)_i-(A_s)_i\,x^n\,,0\}=0$.
\end{itemize}
This completes the proof.
\end{proof}

\begin{la}\label{NewtonAlg_FiniteTermination}
There exists a number $\kappa\in\mathbb{N}$ such that, for every starting value $x^0$\,, Algorithm \ref{ModifiedNewtonAlg} terminates after at most $\kappa$ steps.
\end{la}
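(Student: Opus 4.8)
The plan is to exploit the fact that each update in Algorithm \ref{ModifiedNewtonAlg} depends on the preceding iterate only through a finitely-valued combinatorial object, so that the iteration must eventually repeat a configuration; Lemma \ref{NewtonAlgMonotonicity} then upgrades ``repetition'' to ``genuine fixed point'', which is termination.

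First I would introduce, for $y\in\mathbb{R}^N$, the \emph{active set} $\mathcal{A}(y):=\{(i,s):i\in\mathcal{N},\ s\in\mathbb{S}\backslash\{s_0\},\ (b_s-A_s\,y)_i>0\}$, a subset of $\mathcal{N}\times(\mathbb{S}\backslash\{s_0\})$. Since $\mathcal{N}$ and $\mathbb{S}$ are finite, $\mathcal{A}(y)$ can take at most $M:=2^{N(|\mathbb{S}|-1)}$ distinct values. By the very definition of $A^y_s$ and $b^y_s$, these quantities depend on $y$ only through $\mathcal{A}(y)$; hence, in the reformulation \eqref{ModifiedNewtonAlg_Eq2} of the iteration, both the coefficient matrix $A_{s_0}+\rho\sum_{s\in\mathbb{S}\backslash\{s_0\}}A^{x^n}_s$ and the right-hand side $b_{s_0}+\rho\sum_{s\in\mathbb{S}\backslash\{s_0\}}b^{x^n}_s$ are functions of $\mathcal{A}(x^n)$ alone. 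Because that matrix is a non-singular M-matrix (as noted just after Algorithm \ref{ModifiedNewtonAlg}), the iterate $x^{n+1}$ is uniquely determined by $\mathcal{A}(x^n)$. In particular, $\mathcal{A}(x^n)=\mathcal{A}(x^m)$ implies $x^{n+1}=x^{m+1}$.

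Next I would apply the pigeonhole principle to the iterates $x^1,\dots,x^{M+1}$: there are $M+1$ of them and at most $M$ possible active sets, so $\mathcal{A}(x^n)=\mathcal{A}(x^m)$ for some $1\le n<m\le M+1$, whence $x^{n+1}=x^{m+1}$ by the previous step. Now Lemma \ref{NewtonAlgMonotonicity} gives $x^{n+1}\le x^{n+2}\le\dots\le x^{m+1}$ (legitimate, since all indices are $\ge1$), and as the two endpoints coincide the entire chain is constant; in particular $x^{n+1}=x^{n+2}$. Substituting this into \eqref{ModifiedNewtonAlg_Eq1} yields $G(x^{n+1})=0$, so $x^{n+1}$ solves Problem \ref{PenDiscreteProbDef} and the algorithm has terminated after at most $n+1\le M+1$ steps. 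Taking $\kappa:=M+1$, a number depending only on $N$ and $|\mathbb{S}|$ and not on $x^0$, proves the lemma.

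The combinatorial ingredients (that $A^y_s$ and $b^y_s$ are determined by $\mathcal{A}(y)$, and the counting) are immediate, and the only point requiring a little care is the interface with monotonicity: Lemma \ref{NewtonAlgMonotonicity} holds only from index $1$ onwards, so the repeated active set must be sought among $x^1,x^2,\dots$ rather than starting from $x^0$ — which is exactly how the pigeonhole step above is arranged. Beyond this bookkeeping I do not anticipate a genuine obstacle; the theoretical bound $\kappa$ it produces is of course enormous compared with the handful of steps observed in practice, but that gap is not something this lemma attempts to close.
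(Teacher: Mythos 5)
Your proposal is correct and follows essentially the same route as the paper: both arguments combine the observation that the update depends on the current iterate only through finitely many possible ``compositions'' (your active sets $\mathcal{A}(x^n)$, the paper's finitely many ways of zeroing rows), so the iterates range over a finite set, with the monotonicity of Lemma \ref{NewtonAlgMonotonicity} to force stationarity. Your version is merely more explicit, supplying the concrete bound $\kappa=2^{N(|\mathbb{S}|-1)}+1$ and handling the $n\geq 1$ caveat in the monotonicity lemma, which the paper glosses over.
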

\begin{proof}
Since $A_{s_0} + \rho\sum_{s\in\mathbb{S}\backslash\{s_0\}}A^{x^n}_s$ is a non-singular matrix, we may write $\eqref{ModifiedNewtonAlg_Eq2}$ as
\begin{align}
x^{n+1}
= \big(A_{s_0} + \rho\sum_{s\in\mathbb{S}\backslash\{s_0\}}A^{x^n}_s\big)^{-1}\big(b_{s_0} +\rho \sum_{s\in\mathbb{S}\backslash\{s_0\}} b^{x^n}_s\big).\label{ModifiedNewtonAlg_Eq3}
\end{align}
There is only a finite number of compositions for the right hand side of $\eqref{ModifiedNewtonAlg_Eq3}$
as it consists of given matrices of which some rows are set to zero. Hence, there exists a set $\mathcal{C}$, $|\mathcal{C}|<\infty$, $\mathcal{C}$ independent of $x^0$, such that $x^n\in\mathcal{C}$ for all $n\in\mathbb{N}$. Also, by the properties of the algorithm, if $x^{n'+1}=x^{n'}$
for some $n'\in\mathbb{N}$, then it is $x^n=x^{n'}$ for all $n\geq n'$. This, together with the monotonicity of the algorithm (cf.\,Lemma\,\ref{NewtonAlgMonotonicity}), proves the lemma.
\end{proof}

\begin{theorem}\label{NewtonLimitSolvesPenProblem}
Independently of the starting value $x^0$\,, Algorithm \ref{ModifiedNewtonAlg} converges (in finitely many steps) to a limit $x^*$ which solves Problem \ref{PenDiscreteProbDef}.
\end{theorem}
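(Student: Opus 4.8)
The plan is to read the statement off almost directly from Lemmas \ref{NewtonAlgMonotonicity} and \ref{NewtonAlg_FiniteTermination}, with the uniqueness part of Theorem \ref{PenDiscreteProbDef_Uniq} supplying the last clause about the starting value.

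First I would invoke Lemma \ref{NewtonAlg_FiniteTermination}: the iterates $(x^n)_{n\ge 0}$ all lie in the finite set $\mathcal{C}$ and, by Lemma \ref{NewtonAlgMonotonicity}, the sequence is monotone increasing from $n=1$ on, so it must be eventually constant. Concretely, there is an index $n'\in\mathbb{N}$ with $x^{n+1}=x^n=:x^*$ for every $n\ge n'$; this is exactly what finite termination of the algorithm means, and in particular the limit $x^*$ exists.

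Next I would substitute $n=n'$ into the defining relation \eqref{ModifiedNewtonAlg_Eq1}. Since $x^{n'+1}-x^{n'}=0$, the left-hand side $J_G(x^{n'})(x^{n'+1}-x^{n'})$ vanishes, so $G(x^{n'})=0$; that is, $x^*$ satisfies \eqref{PenDiscreteProbDef_Eq1} and hence solves Problem \ref{PenDiscreteProbDef}. This also discharges the existence claim that was deferred in the proof of Theorem \ref{PenDiscreteProbDef_Uniq}. Finally, to obtain the ``independently of the starting value'' assertion I would appeal to the uniqueness half of Theorem \ref{PenDiscreteProbDef_Uniq}: any run of the algorithm, from any $x^0$, terminates at a solution of Problem \ref{PenDiscreteProbDef}, and since that solution is unique, all runs produce the same $x^*$.

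I do not expect a genuine obstacle here, since the real work was already done in establishing monotonicity (Lemma \ref{NewtonAlgMonotonicity}) and the finiteness of the reachable set $\mathcal{C}$ (Lemma \ref{NewtonAlg_FiniteTermination}). The only point requiring a moment's care is logical rather than technical: one must use only the uniqueness part of Theorem \ref{PenDiscreteProbDef_Uniq}, whose proof does not rely on existence, so that invoking it here is not circular; the existence part is precisely what the present theorem supplies.
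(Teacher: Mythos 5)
Your proposal is correct and follows essentially the same route as the paper: convergence in finitely many steps is taken from Lemma \ref{NewtonAlg_FiniteTermination} (via the monotonicity of Lemma \ref{NewtonAlgMonotonicity} and the finiteness of the reachable set), and $G(x^*)=0$ is then read off from \eqref{ModifiedNewtonAlg_Eq1} once $x^{n+1}=x^n$. Your explicit appeal to the uniqueness half of Theorem \ref{PenDiscreteProbDef_Uniq} for the independence of the limit from $x^0$, together with the remark that this is non-circular, is a point the paper leaves implicit, but it is a correct and welcome clarification rather than a different argument.
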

\begin{proof}
As the proof of convergence has already been given in Lemma \ref{NewtonAlg_FiniteTermination}, it only remains to show that the limit $x^*$ satisfies $G(x^*)=0$.
This follows easily from the fact that $x^*=x^{n+1}=x^n$ for all $n\geq\kappa$, which means that \eqref{ModifiedNewtonAlg_Eq1} reduces to $-G(x^*) = -G(x^n) = J_G(x^n)(x^{n+1}-x^n)=0$.
\end{proof}

\section{Examples of HJB Equations Arising in Finance}\label{ExHJBEqInFinance}

Let $r$, $q\geq 0$, $\sigma>0$ denote interest rate, (continuously paid) dividend rate and volatility, respectively. We introduce the Black-Scholes operator
\begin{equation}
\mathcal{L}^{r,q,\sigma}_{BS}V:=\frac{\partial V}{\partial t} + \frac{1}{2}\sigma^2S^2\frac{\partial^2 V}{\partial S^2} + (r-q)S\frac{\partial V}{\partial S} - rV,\label{BSOperator}
\end{equation}
where $V=V(S,t)$ is some function of $S$ and $t$.

\begin{remark}\label{FiniteDifferenceDiscr}
Let $M$, $N\in\mathbb{N}$. Suppose we want to solve equation \eqref{BSOperator} numerically backwards in time on the interval $[0,S_{max}]\times[0,T]$. Also, suppose that a terminal value and boundary conditions have been fixed.
Let time and space grids be given
by $\{jk : j+1\in\mathcal{M}\}$ and $\{ih : i+1\in\mathcal{N}\}$, respectively, where
$\mathcal{M}:=\{1,\ldots,M\}$, $\mathcal{N}:=\{1,\ldots,N\}$, $k := T/(M-1)$ and $h := S_{max}/(N-1)$.
Using a fully implicit finite difference method with a one-sided difference in time and central differences in space, we obtain the following scheme: For every time step $j\in\mathcal{M}\backslash\{1\}$, we have to find $V^{j-1}\in\mathbb{R}^N$ such that
\begin{equation*}
-A^{r,q,\sigma}_{BS} V^{j-1} + V^j = 0,
\end{equation*}
where $V^j\in\mathbb{R}^N$ is the vector known from the previous time step and $A^{r,q,\sigma}_{BS}\in\mathbb{R}^{N\times N}$ is a tridiagonal matrix with $(b_i)^N_{i=1}$ on the diagonal, $(c_i)^{N-1}_{i=1}$ on the upper diagonal, and $(a_i)^N_{i=2}$ on the lower diagonal; more precisely, for $i\in\mathcal{N}\backslash\{1,N\}$, the coefficients are given by
\begin{align*}
a^{r,q,\sigma}_i&\ = -\frac{1}{2}i^2\sigma^2k + \frac{1}{2}i(r-q)k,\\
b^{r,q,\sigma}_i&\ = 1 + i^2\sigma^2k + rk,\\
\text{and}\quad c^{r,q,\sigma}_i&\ = -\frac{1}{2}i^2\sigma^2k - \frac{1}{2}i(r-q)k,
\end{align*}
and $b^{r,q,\sigma}_1$\,, $c^{r,q,\sigma}_1$\,, $a^{r,q,\sigma}_N$ and $b^{r,q,\sigma}_N$ are chosen to accommodate the boundary conditions.
In finance, boundary conditions can usually be found such that $A^{r,q,\sigma}_{BS}$ is an M-matrix.
\end{remark}
\begin{proof}
See \cite{Seydel_ToolsCompFinance}.
\end{proof}

The examples from mathematical finance listed in Section \ref{ProblemFormulation} (uncertain volatility models \cite{Avellaneda_UncertainVol}, transaction cost models \cite{Leland_TransactionCosts},
and unequal borrowing/lending rates and stock borrowing fees \cite{Amadori_NonlinINtegroDiffProb_OptionPricing_ViscositySolApproach,Bergman_DifferentialINterestRates,Duffie_LendingShortingPricing}), all result
in HJB equations of the form \eqref{DifferentialOperatorProblem_Eq1} where every differential operator
$\mathcal{L}_s$\,, $s\in\mathbb{S}$, can be interpreted to be of Black-Scholes type \eqref{BSOperator} for some set of parameters $r$, $q$ and $\sigma$. (For details, we refer to \cite{Forsyth_Controlled_HJB_PDEs_Finance,Forsyth_NonlinearPDEFinance}, where the models have been described at length.) As seen in Remark \ref{FiniteDifferenceDiscr}, a finite difference discretisation of \eqref{BSOperator} results, for every timestep, in a linear system of equations where the defining matrix is an M-matrix, and, therefore, the above problems all fit into the framework of Problem \ref{DiscreteProbDef}; a detailed outline of one representative case will be given in the next section.

\subsection{Example: European Option Pricing with Stock Borrowing Fees and Unequal Borrowing/Lending Rates}\label{StableConvDiscretisations}
We would like to price a European option that pays $P(S)$ at time $T>0$, where $S$ denotes the price of the underlying. We assume the standard Black-Scholes model with the extension that the cash borrowing rate (denoted by $r_b$) and the cash lending rate (denoted by $r_l$) are not necessarily equal, and, also, that there are stock borrowing fees of $r_f$ reflecting the fact that shorting the stock may have a certain cost. To preclude arbitrage, it must be $r_b\geq r_l\geq r_f$\,.
Different borrowing/lending rates have been discussed in \cite{Amadori_NonlinINtegroDiffProb_OptionPricing_ViscositySolApproach,Bergman_DifferentialINterestRates}, and stock borrowing fees are studied in \cite{Duffie_LendingShortingPricing}; the resulting mathematical model has been explained in \cite{Forsyth_Controlled_HJB_PDEs_Finance}.\bigskip

To price a short position in the option, we have to solve the following nonlinear problem. We define $S_{max}>0$ to be a large upper limit on the value of the underlying $S$ and set $\Omega:=(0,S_{max})$.
For simplicity of presentation, we assume that $P(0)=P(S_{max})=0$; most other boundary conditions
can be treated similarly.

\begin{prob}\label{BorrowLendStockFeesProbDef}
Let $\sigma>0$ and $r_b\geq r_l\geq r_f\geq 0$\,.
Find $V:\overline{\Omega}\times [0,T]\to\mathbb{R}$ such that $V(S,t)=P(S)$ for $(S,t)\in\partial \Omega\times (0,T]$, $V(S,T) = P(S)$ for $S\in\overline{\Omega}$, and
\begin{equation}
\max\big\{
\mathcal{L}^{r,q,\sigma}_{BS}V : (r,q)\in\mathbb{S} 
\big\}= 0\label{BorrowLendStockFeesProbDef_Eq1}
\end{equation}
on $\Omega\times(0,T)$; here, it is $\mathbb{S}:=\{(r_l\,,0),(r_b\,,0),(r_l\,,r_f),(r_b\,,r_b-r_l+r_f)\}$.
\end{prob}

We use the notion of viscosity (sub-/super-) solution as developed in \cite{UsersGuide_ViscositySols}, and, in that framework, the following strong comparison principle -- which guarantees the uniqueness of any viscosity solution -- can be shown to hold for the pricing problem.

\begin{theorem}
Suppose $u$ and $v$ are, respectively, viscosity sub- and supersolution of Problem \ref{BorrowLendStockFeesProbDef}. If the payoff satisfies $P\in\mathcal{C}(\overline{\Omega})$, then it is
$u\leq v$ on $\overline{\Omega}\times (0,T]$.
\end{theorem}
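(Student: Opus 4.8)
The plan is to prove this comparison principle by the doubling-of-variables method of Crandall, Ishii and Lions (see \cite{UsersGuide_ViscositySols}), adapted to the terminal-value, degenerate-parabolic structure of \eqref{BorrowLendStockFeesProbDef_Eq1}. Throughout I use the standard convention that a viscosity subsolution $u$ is upper semicontinuous and satisfies $u\le P$ on the parabolic boundary $(\partial\Omega\times(0,T])\cup(\overline\Omega\times\{T\})$, a supersolution $v$ is lower semicontinuous with $v\ge P$ there, and that both $u$ and $v$ are bounded (natural here, since $\Omega$ is bounded and $P\in\mathcal C(\overline\Omega)$). Arguing by contradiction, suppose $M:=\sup_{\overline\Omega\times(0,T]}(u-v)>0$.

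The first step is to remove the degeneracy at $t=0$, where no boundary datum is prescribed, by passing to $u_\varepsilon:=u-\varepsilon/t$ for small $\varepsilon>0$. Since $\mathcal L^{r,q,\sigma}_{BS}[\varepsilon/t]=-\varepsilon/t^2-r\varepsilon/t$, a short computation with test functions shows that $u_\varepsilon$ is a \emph{strict} subsolution, namely $\max_{(r,q)\in\mathbb S}\mathcal L^{r,q,\sigma}_{BS}u_\varepsilon\ge\varepsilon/t^2$ in the viscosity sense; moreover $u_\varepsilon\to-\infty$ as $t\downarrow0$, and $u_\varepsilon<v$ on $\partial\Omega\times(0,T]$ and on $\overline\Omega\times\{T\}$. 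It therefore suffices to show $u_\varepsilon\le v$ for each fixed $\varepsilon$ and then let $\varepsilon\downarrow0$. If $M_\varepsilon:=\sup(u_\varepsilon-v)>0$, then, using boundedness of $u$ and $v$, upper semicontinuity and compactness of $\overline\Omega$ (together with $u_\varepsilon-v\to-\infty$ as $t\downarrow0$), the supremum is attained at some $(\hat S,\hat t)$ with $\hat S\in(0,S_{max})$ and $\hat t\in(0,T)$.

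Next I would double the variables: for $\alpha>0$ let $(S_\alpha,y_\alpha,t_\alpha,\tau_\alpha)$ maximise
\[
\Phi_\alpha(S,y,t,\tau):=u_\varepsilon(S,t)-v(y,\tau)-\tfrac\alpha2(S-y)^2-\tfrac\alpha2(t-\tau)^2
\]
over $\overline\Omega^2\times[0,T]^2$. The standard estimates (\cite{UsersGuide_ViscositySols}) give $\alpha(S_\alpha-y_\alpha)^2\to0$, $\alpha(t_\alpha-\tau_\alpha)^2\to0$, and convergence along a subsequence of both $(S_\alpha,t_\alpha)$ and $(y_\alpha,\tau_\alpha)$ to a maximum point of $u_\varepsilon-v$; since $M_\varepsilon>0$, for $\alpha$ large these points lie in the interior $\Omega\times(0,T)$ where the equation holds. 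The parabolic theorem on sums (\cite{UsersGuide_ViscositySols}) then produces $a\in\mathbb R$, $p_\alpha:=\alpha(S_\alpha-y_\alpha)$, and real numbers $X\le Y$ with $(a,p_\alpha,X)$ in the parabolic superjet of $u_\varepsilon$ at $(S_\alpha,t_\alpha)$ and $(a,p_\alpha,Y)$ in the parabolic subjet of $v$ at $(y_\alpha,\tau_\alpha)$, together with a matrix inequality which, tested against the vector $(S_\alpha,y_\alpha)$, yields $S_\alpha^2X-y_\alpha^2Y\le 3\alpha(S_\alpha-y_\alpha)^2$. Since $\mathbb S$ is finite, I may pass to a further subsequence so that the control $(r^*,q^*)\in\mathbb S$ realising the strict-subsolution inequality for $u_\varepsilon$ is the same for all $\alpha$; using the supersolution inequality for $v$ with this same $(r^*,q^*)$ and subtracting, the time-slots $a$ cancel and I obtain
\[
\tfrac12\sigma^2\bigl(S_\alpha^2X-y_\alpha^2Y\bigr)+(r^*-q^*)\alpha(S_\alpha-y_\alpha)^2-r^*\bigl(u_\varepsilon(S_\alpha,t_\alpha)-v(y_\alpha,\tau_\alpha)\bigr)\ \ge\ \frac{\varepsilon}{t_\alpha^2}.
\]
Letting $\alpha\to\infty$, the first term is $O(\alpha(S_\alpha-y_\alpha)^2)\to0$ by the matrix inequality, the second term vanishes for the same reason, the third tends to $-r^*M_\varepsilon\le0$ since $r^*\in\{r_l,r_b\}$ so $r^*\ge0$, while $\varepsilon/t_\alpha^2\to\varepsilon/\hat t^2>0$ — a contradiction. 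Hence $M_\varepsilon\le0$, and letting $\varepsilon\downarrow0$ gives $u\le v$ on $\overline\Omega\times(0,T]$.

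The main obstacle is the degenerate, unbounded second-order coefficient $\tfrac12\sigma^2S^2$: the numbers $X,Y$ delivered by the theorem on sums need not be bounded as $\alpha\to\infty$, so $\sigma^2(S_\alpha^2X-y_\alpha^2Y)$ cannot be estimated term by term; the key is to insert the \emph{geometric} test vector $(S_\alpha,y_\alpha)$ into the matrix inequality, which uses the precise $S^2$-structure of the Black–Scholes operator and makes the whole second-order contribution $O(\alpha(S_\alpha-y_\alpha)^2)=o(1)$. The secondary technical point — the absence of a boundary condition at $t=0$ — is handled by the $-\varepsilon/t$ perturbation above, and the remaining steps are the routine Crandall–Ishii–Lions machinery.
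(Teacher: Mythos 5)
Your argument is correct, and the difference from the paper is one of presentation rather than substance: the paper's proof is a single sentence that transforms the terminal-value problem into an initial-value problem and then refers the reader to the general comparison theorems of \cite{UsersGuide_ViscositySols} and \cite{SebChaumont_StronCompPrinc}, whereas you carry out the Crandall--Ishii--Lions doubling argument explicitly. The two ingredients you isolate are exactly the ones the citation silently outsources: (i) the $-\varepsilon/t$ perturbation at $t=0$, which is the data-free end of this backward problem (equivalently the $-\varepsilon/(T-\tau)$ barrier after the time reversal the paper alludes to), and (ii) testing the matrix inequality from the theorem on sums against the vector $(S_\alpha,y_\alpha)$, which is precisely the verification of the structure condition (3.14) of \cite{UsersGuide_ViscositySols} for the Black--Scholes operator with its unbounded coefficient $\tfrac12\sigma^2S^2$. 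The finiteness of $\mathbb{S}$ makes the passage from one operator to the $\max$ equation painless, as you note: the subsolution inequality supplies a single control $(r^*,q^*)$, the supersolution inequality holds for every control, and $r^*\ge 0$, $|r^*-q^*|\le r_b$ uniformly (indeed $r^*-q^*\in\{r_l,r_b,r_l-r_f\}\ge 0$ for every element of $\mathbb{S}$), so no subsequence extraction is really needed. What your version buys is a self-contained, checkable proof; what the paper's version buys is brevity and coverage of technicalities it never spells out.

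One caveat. You assume $u\le P\le v$ pointwise on the entire parabolic boundary. This matches the literal formulation of Problem \ref{BorrowLendStockFeesProbDef}, which imposes the Dirichlet data classically, so your proof does establish the theorem as stated. However, the consistency Lemma \ref{BorrowLendStockBorrow_ConsistencyLemma} interprets the boundary condition in the relaxed viscosity sense via $\min\{\mathcal{L}^{r,q,\sigma}_{BS}\phi,\,\phi-P\}$, and the Barles--Souganidis convergence theorem the paper ultimately invokes requires comparison for sub- and supersolutions satisfying only that relaxed condition. Upgrading your argument to that setting needs an additional step (either showing the boundary data is attained, or treating maximum points on $\partial\Omega\times(0,T]$ and at $t=T$ through the boundary viscosity inequalities, where the degeneracy of the operator at $S=0$ must be examined). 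This is standard for Black--Scholes-type operators and is covered by the cited references, but it is the one place where they do more than your write-up.
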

\begin{proof}
After transforming the terminal value problem into an initial value problem, this can be found directly by checking the necessary conditions in \cite{UsersGuide_ViscositySols} or \cite{SebChaumont_StronCompPrinc}.
\end{proof}

It now seems natural to apply the discretisation technique from Remark \ref{FiniteDifferenceDiscr} to
every Black-Scholes operator in equation \eqref{BorrowLendStockFeesProbDef_Eq1} individually to obtain a nonlinear discrete approximation of Problem \ref{BorrowLendStockFeesProbDef}.

\begin{prob}\label{BorrowLendStockFeesProbDef_Discretised}
Let $M$, $N\in\mathbb{N}$ and use the notation of Remark \ref{FiniteDifferenceDiscr}. For $\sigma$ and $\mathbb{S}$ as in Problem \ref{BorrowLendStockFeesProbDef}, find $(V^j)_{j\in\mathcal{M}}\subset\mathbb{R}^N$ such that
\begin{equation}
\min\big\{A^{r,q,\sigma}_{BS}V^{j-1}-V^j : (r,q)\in\mathbb{S}\big\} = 0,\quad j\in\mathcal{M}\backslash\{1\},\label{BorrowLendStockFeesProbDef_Discretised_Eq1}
\end{equation}
where
\begin{itemize}
\item $V^M\in\mathbb{R}^N$ is the function $P$ evaluated on the grid $\{ih: i+1\in\mathcal{N}\}$
\item and $b^{r,q,\sigma}_1=b^{r,q,\sigma}_N=1$ and $c^{r,q,\sigma}_1=a^{r,q,\sigma}_N=0$ in
all matrices $A^{r,q,\sigma}_{BS}$.
\end{itemize}
\end{prob}

What follows are a few properties -- stability, consistency and monotonicity, used as defined in \cite{BarlesMainArticle} -- of the just introduced numerical scheme; based on these, we will then be able to deduce convergence.

\begin{la}\label{BorrowLendStockBorrow_StabilityLemma}
If $\sigma^2\geq r_b$\,, the discretisation in Problem \ref{BorrowLendStockFeesProbDef_Discretised} is unconditionally stable in the maximum norm.
\end{la}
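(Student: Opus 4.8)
The claim is that the discrete scheme in Problem \ref{BorrowLendStockFeesProbDef_Discretised} is unconditionally stable in the maximum norm under the condition $\sigma^2 \geq r_b$. The natural strategy is to show that, for each time step $j$, the solution $V^{j-1}$ is bounded in $\|\cdot\|_\infty$ by $\|V^j\|_\infty$ (plus, if boundary data were nonzero, a boundary contribution — here the boundary is homogeneous, so this is clean), with a constant independent of the grid parameters $h$ and $k$. The key structural fact to exploit is that, by Problem \ref{DiscreteProbDef} and the discussion following Remark \ref{FiniteDifferenceDiscr}, equation \eqref{BorrowLendStockFeesProbDef_Discretised_Eq1} is of the form $\min\{A_s x - b_s : s \in \mathbb{S}\} = 0$ (after the sign flip in Remark \ref{SubAlsoOK}, since the original equation has a ``max''), where here $x = V^{j-1}$, $b_s = V^j$, and $A_s = A^{r,q,\sigma}_{BS}$ for the appropriate parameter pair. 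So the solution $x = V^{j-1}$ satisfies, in each row $i$, $(A_{s_i} x - V^j)_i = 0$ for some $s_i \in \mathbb{S}$ (the active control), while $(A_s x - V^j)_i \geq 0$ for all other $s$.

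First I would verify the M-matrix hypotheses of Problem \ref{DiscreteProbDef} for each $A^{r,q,\sigma}_{BS}$ appearing in $\mathbb{S}$ — this is where $\sigma^2 \geq r_b$ enters. Concretely: the off-diagonal entries are $a_i^{r,q,\sigma} = -\tfrac12 i^2\sigma^2 k + \tfrac12 i(r-q)k$ and $c_i^{r,q,\sigma} = -\tfrac12 i^2\sigma^2 k - \tfrac12 i(r-q)k$; for these to be non-positive one needs $i\sigma^2 \geq |r-q|$ for all relevant $i$, and inspecting the four control pairs in $\mathbb{S}$ one checks that $|r-q|$ is maximised by a quantity controlled by $r_b$ (e.g. $(r_b, r_b - r_l + r_f)$ gives $|r-q| = |r_l - r_f| \le r_l \le r_b$, and $(r_b,0)$ gives $r_b$), so $\sigma^2 \geq r_b$ together with $i \geq 1$ suffices. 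The diagonal entry $b_i = 1 + i^2\sigma^2 k + rk > 0$ is automatic, and the row sums are $b_i + a_i + c_i = 1 + rk \geq 1 > 0$ for interior rows (so every interior row sum is strictly positive, in particular at least one is — consistent with the common-positive-row-sum requirement, though here one should be a little careful: Problem \ref{DiscreteProbDef} asks the positive row sum to be in the same location for all $s$, which holds since the row sum $1 + rk$ is positive in all interior rows and $=1$ in the two boundary rows, uniformly across controls). Then $A_{s_i}$ being an M-matrix gives $(A_{s_i})^{-1} \geq 0$.

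Next, form the matrix $A^*$ whose $i$-th row is the $i$-th row of $A_{s_i}$, exactly as in the proof of Theorem \ref{DiscreteProbDef_UniqSol}: since the M-matrix property here comes from the sign pattern and the non-negative row sums, and these hold row-by-row uniformly over $s$, $A^*$ is again an M-matrix, hence $(A^*)^{-1} \geq 0$, and moreover $A^*$ has row sums $\geq 1$ (interior rows give $1+rk\ge1$, boundary rows give exactly $1$). From $A^* V^{j-1} = V^j$ one wants $\|V^{j-1}\|_\infty \leq \|V^j\|_\infty$. The clean way: let $\mathbf{1}$ be the all-ones vector and $c := \|V^j\|_\infty$; then $A^*(c\mathbf{1} - V^{j-1}) = c\, A^*\mathbf{1} - V^j \geq c\mathbf{1} - V^j \geq 0$ (using $A^*\mathbf{1} \geq \mathbf{1}$ and $c \geq 0$), so by $(A^*)^{-1}\ge 0$ we get $c\mathbf{1} - V^{j-1} \geq 0$, i.e. $V^{j-1} \leq c\mathbf{1}$; the symmetric argument with $-c$ gives $V^{j-1} \geq -c\mathbf{1}$, hence $\|V^{j-1}\|_\infty \leq \|V^j\|_\infty$. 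Iterating over time steps, $\|V^j\|_\infty \leq \|V^M\|_\infty = \|P\|_\infty$ for all $j$, which is unconditional stability with constant $1$ — independent of $h$, $k$, hence of $N$, $M$.

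**Main obstacle.** The one genuinely delicate point is checking that all the M-matrix requirements of Problem \ref{DiscreteProbDef} — in particular the ``positive row sum equally located in all $A_s$'' condition — hold simultaneously across the four control pairs in $\mathbb{S}$ under precisely the hypothesis $\sigma^2 \geq r_b$, and in particular handling the boundary rows $i=1$ and $i=N$ (where $b_1 = b_N = 1$, $c_1 = a_N = 0$, so these rows are just the identity and contribute row sum exactly $1$) consistently with the interior rows. Everything else — the existence of the active-control matrix $A^*$, its M-matrix property, and the comparison argument $A^* \mathbf{1} \geq \mathbf{1}$ — is then a direct reuse of the technique already established in the proof of Theorem \ref{DiscreteProbDef_UniqSol}. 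I would also remark explicitly that the sign convention of Remark \ref{SubAlsoOK} applies here since \eqref{BorrowLendStockFeesProbDef_Eq1} is a ``max'' equation, so one works with $\tilde A_s = -A_s$ and the conclusion transfers verbatim.
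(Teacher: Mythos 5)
Your proof is correct, and it rests on the same three structural facts as the paper's own proof --- the sign conditions $a_i^{r,q,\sigma},\,c_i^{r,q,\sigma}\le 0$ (which is where $\sigma^2\ge r_b$ enters, and which you verify across the four control pairs more explicitly than the paper does), the row sums $1+rk\ge 1$, and the existence of an active control in each row --- but it packages them differently. The paper argues component-wise: solving the active row for $V_i^{j-1}$ and taking absolute values gives $|V_i^{j-1}|\le \frac{1-a_i-c_i}{b_i}\max\big(|V_i^j|,|V_{i-1}^{j-1}|,|V_{i+1}^{j-1}|\big)$ with $\frac{1-a_i-c_i}{b_i}\le 1$, whence a discrete maximum principle without ever invoking invertibility. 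You instead assemble the global active-control matrix $A^*$, check that it is an M-matrix, and run the comparison argument $A^*(c\mathbf{1}\pm V^{j-1})\ge 0$ with $c=\|V^j\|_\infty$; this buys a cleaner one-line conclusion $\|V^{j-1}\|_\infty\le\|V^j\|_\infty$ at the cost of needing $(A^*)^{-1}\ge 0$ and $A^*\mathbf{1}\ge\mathbf{1}$, both of which you do establish, and it reuses the machinery of Theorem \ref{DiscreteProbDef_UniqSol} rather than a bare maximum-principle computation. One cosmetic correction: no appeal to Remark \ref{SubAlsoOK} is needed here. The minus sign already absorbed into the discretisation of Remark \ref{FiniteDifferenceDiscr} (the scheme reads $-A_{BS}^{r,q,\sigma}V^{j-1}+V^j=0$) converts the ``max'' in \eqref{BorrowLendStockFeesProbDef_Eq1} into the ``min'' of \eqref{BorrowLendStockFeesProbDef_Discretised_Eq1} with genuine M-matrices $A_{BS}^{r,q,\sigma}$, so your main argument --- which correctly works with the equation exactly as stated in Problem \ref{BorrowLendStockFeesProbDef_Discretised} --- requires no further sign flip.
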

\begin{proof}
For any $(r,q)\in\mathbb{S}$ and $i\in\mathcal{N}\backslash\{1,N\}$, it is $b^{r,q,\sigma}_i\geq 1$ and $a^{r,q,\sigma}_i$\,, $c^{r,q,\sigma}_i\leq 0$.
Hence, from \eqref{BorrowLendStockFeesProbDef_Discretised_Eq1}, for any given $j\in\mathcal{M}\backslash\{1\}$ and $i\in\mathcal{N}\backslash\{1,N\}$ there is choice of $(r,q)\in\mathbb{S}$ such that
\begin{align*}
|V^{j-1}_i| =&\ \frac{1}{b^{r,q,\sigma}_i}|V^j_i - a^{r,q,\sigma}_iV^{j-1}_{i-1} - c^{r,q,\sigma}_i V^{j-1}_{i+1}|\\
\leq &\ \frac{1 - a^{r,q,\sigma}_i - c^{r,q,\sigma}_i }{b^{r,q,\sigma}_i}\max\big(|V^j_i|,|V^{j-1}_{i-1}|,|V^{j-1}_{i+1}|\big),
\end{align*}
and we infer that
\begin{equation*}
\max_{1\leq i\leq N}|V^{j-1}_i|\leq \max\big(|V^{j-1}_1|,|V^{j-1}_N|,\max_{1\leq i\leq N}|V^j_i|\big)\leq \max_{1\leq i\leq N}P(ih),\quad j\in\mathcal{M}\backslash\{1\}.
\end{equation*}
\end{proof}

We point out that the constraint on the size of $\sigma$ required in Lemma \ref{BorrowLendStockBorrow_StabilityLemma} can be avoided when using upwinding instead of central differences for the space derivatives. However, since, in practice, the implicit scheme runs stably regardless of the choice of $\sigma$, we do not investigate this point any further.

\begin{la}\label{BorrowLendStockBorrow_ConsistencyLemma}
The discretisation in Problem \ref{BorrowLendStockFeesProbDef_Discretised} is consistent.
\end{la}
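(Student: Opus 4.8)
The plan is to verify consistency in the sense of \cite{BarlesMainArticle}. Writing \eqref{BorrowLendStockFeesProbDef_Discretised_Eq1} in normalised form, set, for $\phi\in C^\infty(\overline\Omega\times[0,T])$,
$S^{h,k}_i := \tfrac1k\min\{A^{r,q,\sigma}_{BS}\phi^{j-1}-\phi^j : (r,q)\in\mathbb{S}\}_i$, where $\phi^{j}$ is $\phi$ evaluated on the space grid at the corresponding time level. One must show that for every interior point $(S,t)\in\Omega\times(0,T)$, every sequence of grid points $(i_\nu h_\nu,j_\nu k_\nu)\to(S,t)$ with $h_\nu,k_\nu\to0$, and every sequence of constants $\xi_\nu\to0$, one has $S^{h_\nu,k_\nu}(\phi+\xi_\nu)\to -\max\{\mathcal{L}^{r,q,\sigma}_{BS}\phi(S,t):(r,q)\in\mathbb{S}\}$; since every Black--Scholes operator and hence their pointwise maximum is continuous in $(S,t)$, no semicontinuous envelopes are needed and it suffices to establish this single limit. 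I would treat the interior rows $i\in\mathcal{N}\setminus\{1,N\}$, $j\in\mathcal{M}\setminus\{1\}$ first, and the rows $i\in\{1,N\}$ separately.

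For a fixed control $(r,q)$ and an interior row, I would substitute the coefficients from Remark \ref{FiniteDifferenceDiscr} and regroup: $\tfrac1k(A^{r,q,\sigma}_{BS}\phi^{j-1}-\phi^j)_i$ is the backward-in-time difference quotient $\tfrac1k(\phi^{j-1}_i-\phi^j_i)$, plus $-\tfrac12\sigma^2(ih)^2\,h^{-2}(\phi^{j-1}_{i+1}-2\phi^{j-1}_i+\phi^{j-1}_{i-1})$, plus $-(r-q)(ih)\,(2h)^{-1}(\phi^{j-1}_{i+1}-\phi^{j-1}_{i-1})$, plus $r\phi^{j-1}_i$. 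By Taylor expansion of the smooth $\phi$ these converge, respectively, to $-\partial_t\phi$, to $-\tfrac12\sigma^2S^2\partial_{SS}\phi$ with error $O(h^2)$, to $-(r-q)S\partial_S\phi$ with error $O(h^2)$, and to $r\phi$, so the row tends to $-\mathcal{L}^{r,q,\sigma}_{BS}\phi(S,t)$ with total truncation error $O(h^2)+O(k)$. The constant perturbation $\xi$ survives only through the row sum of $A^{r,q,\sigma}_{BS}$, which equals $1+rk$, producing the extra term $r\xi$ after dividing by $k$; this vanishes as $\xi\to0$. Moving the base grid point to $(S,t)$ is harmless since $\phi$ and the coefficients $\tfrac12\sigma^2S^2$, $(r-q)S$ are continuous. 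Because $\min$ over the finite set $\mathbb{S}$ is Lipschitz, it commutes with the limit, whence $S^{h,k}(\phi+\xi)\to\min\{-\mathcal{L}^{r,q,\sigma}_{BS}\phi:(r,q)\in\mathbb{S}\}=-\max\{\mathcal{L}^{r,q,\sigma}_{BS}\phi:(r,q)\in\mathbb{S}\}$, which is the consistency requirement for \eqref{BorrowLendStockFeesProbDef_Eq1} — the overall sign and the positive factor $1/k$ only interchange the roles of viscosity sub- and supersolution and do not affect the equation $\{\cdot\}=0$.

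It remains to handle the rows $i\in\{1,N\}$: there $b^{r,q,\sigma}_1=b^{r,q,\sigma}_N=1$ and $c^{r,q,\sigma}_1=a^{r,q,\sigma}_N=0$ for every control, so \eqref{BorrowLendStockFeesProbDef_Discretised_Eq1} reduces on these rows to $\phi^{j-1}_1-\phi^j_1=0$ and $\phi^{j-1}_N-\phi^j_N=0$, which together with the terminal data $V^M=P$ on the grid and the hypothesis $P(0)=P(S_{max})=0$ forces $V^j_1=V^j_N=0$ for all $j$; hence the scheme reproduces the Dirichlet data exactly, and consistency at the boundary holds in the sense appropriate to a Dirichlet boundary value problem (cf.\ \cite{BarlesMainArticle}). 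The only genuinely delicate point is this last one, namely embedding the trivial boundary rows into the same consistency statement; the interior estimate is the standard finite-difference truncation-error computation, and once the normalisation by $k$ is fixed so that the discrete operator tends to a (negative) multiple of the continuous Hamiltonian, everything else is routine.
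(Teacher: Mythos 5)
Your interior argument is correct and is exactly the content the paper leaves implicit: the paper's proof merely writes down the scheme function $\mathfrak{S}$ and asserts the $\liminf$/$\limsup$ inequalities, whereas you actually carry out the Taylor expansion, identify the $O(h^2)+O(k)$ truncation error, track the constant perturbation $\xi$ through the row sum $1+rk$, and observe that the finite minimum commutes with the limit --- all of which the paper's reader must supply for themselves (your sign bookkeeping, giving $-\max\{\mathcal{L}^{r,q,\sigma}_{BS}\phi\}$ after division by $k$, is also the correct one for the definitions as stated). The genuine divergence is at the boundary. The paper does \emph{not} argue that the scheme reproduces the Dirichlet data and stop there; it redefines the scheme function on the rows $i\in\{1,N\}$ and at $j=M$ as $\phi^j_i-P(ih)$ and then states consistency with the \emph{relaxed} boundary operator, $\zeta=\min\{\mathcal{L}^{r,q,\sigma}_{BS}\phi,\ \phi-P\}$ for the $\liminf$ and the corresponding $\max$ for the $\limsup$. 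This is what the Barles--Souganidis framework actually requires: on $\partial\Omega$ a subsolution need only satisfy the equation \emph{or} the boundary condition, so the consistency limit at the boundary must be the enveloped expression, not the raw Dirichlet identity. Your uniform normalisation $S^{h,k}=\tfrac1k\min\{\cdot\}$ is the one place where the bookkeeping does not line up: on the row $i=1$ it produces $\tfrac1k(\phi^{j-1}_1-\phi^j_1)\to-\partial_t\phi$, which is neither the boundary condition nor the relaxed operator, and your subsequent pivot to ``the scheme forces $V^j_1=0$'' is a statement about the discrete solution rather than about the scheme function evaluated on an arbitrary smooth test function, which is what consistency quantifies over. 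The fix is exactly the paper's device --- use an unscaled scheme function $\phi^j_i-P(ih)$ on the boundary rows and state the limit with the $\min$/$\max$ envelope --- so you correctly isolated the delicate point but resolved it with the wrong object.
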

\begin{proof}
Let $\phi:\overline{\Omega}\times [0,T]\to\mathbb{R}$, and, for $i\in\mathcal{N}$ and $j\in\mathcal{M}$, write $\phi^j_i:=\phi(ih,jk)$.
We then define 
\begin{multline*}
\mathfrak{S}(h,k,ih,jk,\phi^j_i\,,\phi):=\\
\begin{cases} a^{r,q,\sigma}_i \phi^j_{i-1} + b^{r,q,\sigma}_i \phi^j_i + c^{r,q,\sigma}_i \phi^j_{i+1} - \phi^{j+1}_i & \text{if $i\in\mathcal{N}\backslash\{1,N\}$ $\wedge$ $j\in\mathcal{M}\backslash\{M\}$,}\\
\phi^j_i-P(ih) &\text{if $i\in\{1,N\}$ $\vee$ $j=M$,}
\end{cases}
\end{multline*}
and, for general $(S,t)\in\Omega\times(0,T)$ -- which is not necessarily on the grid -- we set
\begin{equation*}
\mathfrak{S}(h,k,S,t,\phi(S,t),\phi):=\,\mathfrak{S}(h,k,ih,jk,\phi^j_i\,,\phi) 
\end{equation*}
using $i = \lceil S/h\rceil h$ and $j = \lceil t/k\rceil k$.
This definition of the function $\mathfrak{S}$ is in line with the notation in \cite{BarlesMainArticle}, and, clearly, our numerical scheme from Remark \ref{FiniteDifferenceDiscr} corresponds to $\mathfrak{S}(h,k,ih,jk,V^j_i\,,(V^j)_{j\in\mathcal{M}})=0$.
Now, if $\phi$ is smooth on $\overline{\Omega}\times [0,T]$, we have
\begin{equation*}
\liminf
\frac{
\mathfrak{S}(h,k,S,t,\phi(S,t),\phi)}{k}\geq \zeta(S_0\,,t_0)
\end{equation*}
as $h\to 0$, $k\to 0$, $t\to t_0$ and $S\to S_0$\,, where
\begin{equation*}
\zeta(S_0\,,t_0) :=
\begin{cases}
(\mathcal{L}_{BS}^{r,q,\sigma}\phi)(S_0\,,t_0)
&\text{if $(S_0\,,t_0)\in\Omega\times (0,T)$,}\\
\min\big\{(\mathcal{L}_{BS}^{r,q,\sigma}\phi)(S_0\,,t_0)\,,\, \phi(S_0\,,t_0)-P(S_0)   \big\}
&\text{if $S_0\in\partial\Omega$ or $t_0=T$,}
\end{cases}
\end{equation*}
and the relation also holds when replacing ``$\liminf$'', ``$\geq$'' and ``$\min$'' by ``$\limsup$'', ``$\leq$'' and ``$\max$'', respectively.
This proves the lemma.
\end{proof}

\begin{la}\label{BorrowLendStockBorrow_MonotonicityLemma}
The discretisation in Problem \ref{BorrowLendStockFeesProbDef_Discretised} is monotone.
\end{la}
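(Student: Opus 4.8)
The plan is to verify directly the monotonicity requirement of \cite{BarlesMainArticle} for the scheme $\mathfrak{S}$ built in the proof of Lemma~\ref{BorrowLendStockBorrow_ConsistencyLemma}: one must show that, at every grid node, the scheme function is non-decreasing in the unknown attached to that node and non-increasing in every other grid value. I would first dispose of the two easy families of rows. For $i\in\{1,N\}$ (and likewise at $j=M$), the choice $b^{r,q,\sigma}_1=b^{r,q,\sigma}_N=1$, $c^{r,q,\sigma}_1=a^{r,q,\sigma}_N=0$ in Problem~\ref{BorrowLendStockFeesProbDef_Discretised} makes the corresponding row equal to $V^{j-1}_1-V^j_1$ (resp.\ $V^{j-1}_N-V^j_N$, resp.\ $V^M_i-P(ih)$); each of these is affine with coefficient $+1$ on its own node and $-1$ (or nothing) elsewhere, hence trivially monotone.

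The substance is the interior rows $i\in\mathcal{N}\setminus\{1,N\}$, where reading off the ``min'' in \eqref{BorrowLendStockFeesProbDef_Discretised_Eq1} the scheme is
\[
\min_{(r,q)\in\mathbb{S}}\big(a^{r,q,\sigma}_iV^{j-1}_{i-1}+b^{r,q,\sigma}_iV^{j-1}_i+c^{r,q,\sigma}_iV^{j-1}_{i+1}-V^j_i\big)=0,
\]
in which $V^{j-1}_i$ is the value at the current node and $V^{j-1}_{i\pm1}$, $V^j_i$ are values at other nodes. For each fixed control $(r,q)\in\mathbb{S}$ the bracketed expression is affine, with coefficients $b^{r,q,\sigma}_i>0$ on the current node and $a^{r,q,\sigma}_i\leq0$, $c^{r,q,\sigma}_i\leq0$, $-1\leq0$ on the other three nodes — these are exactly the sign properties already recorded in the proof of Lemma~\ref{BorrowLendStockBorrow_StabilityLemma} (they rest on the hypothesis $\sigma^2\geq r_b$, under which all $A^{r,q,\sigma}_{BS}$ are M-matrices; with upwinding the constraint disappears). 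Hence each such affine expression is non-decreasing in $V^{j-1}_i$ and non-increasing in the remaining values, and I would then invoke the elementary observation that a pointwise minimum of a finite family of functions that are all non-decreasing in a given variable and all non-increasing in the others is again non-decreasing in that variable and non-increasing in the others; applying this with the family indexed by $(r,q)\in\mathbb{S}$ gives the claim for the interior rows and completes the proof.

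There is essentially no hard step here: the argument is a direct sign check plus the closure of the monotonicity property under taking minima. The only point that genuinely has to be watched is the non-positivity of the first–order (convection) coefficients $a^{r,q,\sigma}_i$ and $c^{r,q,\sigma}_i$ with central differencing, which is precisely the condition $\sigma^{2}\geq r_b$ already assumed for stability in Lemma~\ref{BorrowLendStockBorrow_StabilityLemma}; granting that, monotonicity is immediate from the M-matrix structure of each $A^{r,q,\sigma}_{BS}$ together with the $-1$ coefficient coupling the two time levels.
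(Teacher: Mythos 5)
Your proof is correct and follows essentially the same route as the paper's, which is a one-line observation that $\mathfrak{S}(h,k,S,t,z,\phi)\leq\mathfrak{S}(h,k,S,t,z,\psi)$ for $\phi\geq\psi$ because $a^{r,q,\sigma}_i,c^{r,q,\sigma}_i\leq 0$. You simply make explicit two points the paper elides --- that the minimum over the finite control set preserves the monotonicity of each affine branch, and that the sign condition on $a^{r,q,\sigma}_i$ rests on $\sigma^2\geq r_b$ as in Lemma~\ref{BorrowLendStockBorrow_StabilityLemma} --- both of which are accurate and, if anything, tighten the paper's argument.
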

\begin{proof}
We use the notation of Lemma \ref{BorrowLendStockBorrow_ConsistencyLemma}.
It is
\begin{equation*}
\mathfrak{S}(h,k,S,t,z,\phi)\leq \mathfrak{S}(h,k,S,t,z,\psi)
\end{equation*}
if $\phi\geq\psi$ since $a^{r,q,\sigma}_i$, $c^{r,q,\sigma}_i\leq 0$.
\end{proof}

Finally, we have done enough preparatory work to formulate the following theorem on the convergence properties of our numerical scheme.

\begin{theorem}
In the limit $h$, $k\to 0$, the solution $(V^j)_{j\in\mathcal{M}}$ to Problem \ref{BorrowLendStockFeesProbDef_Discretised} converges uniformly on $\overline{\Omega}\times[0,T]$ to the unique viscosity solution to Problem \ref{BorrowLendStockFeesProbDef}.
\end{theorem}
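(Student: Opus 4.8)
The plan is to invoke the celebrated Barles--Souganidis framework, which says that a numerical scheme that is \emph{monotone}, \emph{stable}, and \emph{consistent} converges to the unique viscosity solution of the limiting equation, provided the latter satisfies a strong comparison principle. All four ingredients are now in hand. The strong comparison principle for Problem \ref{BorrowLendStockFeesProbDef} is the content of the theorem preceding Problem \ref{BorrowLendStockFeesProbDef_Discretised}; stability (under the mild hypothesis $\sigma^2 \geq r_b$) is Lemma \ref{BorrowLendStockBorrow_StabilityLemma}; consistency, in the precise sense of \cite{BarlesMainArticle} including the boundary/terminal data, is Lemma \ref{BorrowLendStockBorrow_ConsistencyLemma}; and monotonicity is Lemma \ref{BorrowLendStockBorrow_MonotonicityLemma}.

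The key steps, in order, would be as follows. First, I would recall the abstract convergence theorem from \cite{BarlesMainArticle} verbatim, taking care that our sign conventions match: the scheme is written in the form $\mathfrak S(h,k,S,t,V^j_i,(V^j)_j)=0$ with the ``$\min$'' absorbed into $\zeta$ on the boundary, exactly as set up in Lemma \ref{BorrowLendStockBorrow_ConsistencyLemma}, and the equation $\max\{\mathcal L^{r,q,\sigma}_{BS}V:(r,q)\in\mathbb S\}=0$ (equivalently $\min\{-\mathcal L^{r,q,\sigma}_{BS}V\}=0$) together with the terminal and Dirichlet conditions must be viewed as a single degenerate-elliptic relation after the usual time reversal $t\mapsto T-t$. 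Second, I would define the half-relaxed limits
\begin{equation*}
\overline V(S,t) := \limsup_{\substack{h,k\to 0\\ (S',t')\to(S,t)}} V^{\lceil t'/k\rceil}_{\lceil S'/h\rceil},\qquad
\underline V(S,t) := \liminf_{\substack{h,k\to 0\\ (S',t')\to(S,t)}} V^{\lceil t'/k\rceil}_{\lceil S'/h\rceil},
\end{equation*}
which are well defined and finite by the uniform bound of Lemma \ref{BorrowLendStockBorrow_StabilityLemma}. Third, using monotonicity and consistency one shows in the standard way that $\overline V$ is a viscosity subsolution and $\underline V$ is a viscosity supersolution of Problem \ref{BorrowLendStockFeesProbDef} (including the boundary conditions in the viscosity sense, which is why the boundary terms were folded into $\zeta$). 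Fourth, the strong comparison principle gives $\overline V\le \underline V$ on $\overline\Omega\times(0,T]$; since trivially $\underline V\le\overline V$, the two coincide, the common value $V$ is \emph{the} viscosity solution, and equality of the relaxed limits is exactly uniform convergence of $(V^j)_{j\in\mathcal M}$ to $V$ on compact subsets, which here is all of $\overline\Omega\times[0,T]$.

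The main obstacle is not any single hard estimate but the bookkeeping at the parabolic boundary: one must verify that the consistency statement of Lemma \ref{BorrowLendStockBorrow_ConsistencyLemma} is genuinely of the form required by \cite{BarlesMainArticle} \emph{up to and including} $\partial\Omega\times(0,T]$ and the terminal slice $t=T$, so that the relaxed-limit subsolution/supersolution property holds on the closed cylinder and the comparison principle can be applied there. A secondary point worth a sentence is that the scheme as discretised is an M-matrix system at each time step (Remark \ref{FiniteDifferenceDiscr}), hence uniquely and stably solvable, so the discrete solution $(V^j)_{j\in\mathcal M}$ that the theorem refers to genuinely exists; everything else is the now-classical Barles--Souganidis argument, and I would simply cite \cite{BarlesMainArticle} (and \cite{Barles_ErrorBoundsMonotoneApproxSchemes} for the relaxed-limit machinery) for the routine parts rather than reproduce them.
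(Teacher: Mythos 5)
Your proposal is correct and follows essentially the same route as the paper: the paper's proof is a one-line appeal to the Barles--Souganidis framework, citing Lemmas \ref{BorrowLendStockBorrow_StabilityLemma}, \ref{BorrowLendStockBorrow_ConsistencyLemma} and \ref{BorrowLendStockBorrow_MonotonicityLemma} together with the strong comparison principle and the results of \cite{BarlesMainArticle}. You simply spell out the half-relaxed-limit machinery and the boundary bookkeeping that the paper leaves implicit.
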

\begin{proof}
Using Lemmas \ref{BorrowLendStockBorrow_StabilityLemma}, \ref{BorrowLendStockBorrow_ConsistencyLemma} and \ref{BorrowLendStockBorrow_MonotonicityLemma}, this follows from results in \cite{BarlesMainArticle}.
\end{proof}

\section{Numerical Results}\label{Section_NumResults}

In this section, we numerically solve Problem \ref{BorrowLendStockFeesProbDef} using the penalty approach presented in Section \ref{Section_PenalisationDiscrProb}. To have a non-trivial financial instrument, we choose
\begin{equation*}
P(S) = \begin{cases}
S/4-25 &\text{if $S\in[100,200)$},\\
-S/4+75 &\text{if $S\in[200,300)$},\\
0 &\text{else},
\end{cases}
\end{equation*}
which is the payoff function of a butterfly spread; unless stated differently, all other parameters are set as given in Table \ref{tab:DefaultParam}.\bigskip

\begin{table}[b]
\begin{tabular}{|c|c|c|c|c|c|c|c|c|c|}
\hline
$r_b$ & $r_l$ & $r_f$  & $\sigma$ & T & $S_{max}$ & M & N & tol & $\rho$\\ \hline
0.15 & 0.1 & 0.08 & 0.4 & 1 & 600 & 400 & 400 & 1e-08 & 1e04 \\ \hline
\end{tabular}
\caption{The parameters used in the numerical computations. (For illustrative purposes, we take unrealistically high interest rates.)}
\label{tab:DefaultParam}
\end{table}

After discretising the problem (cf.\,Section\,\ref{StableConvDiscretisations}), we are left with the nonlinear discrete equation in Problem \ref{BorrowLendStockFeesProbDef_Discretised} (which corresponds directly to Problem \ref{DiscreteProbDef}). To solve it, we employ (i) the penalty approach from Section \ref{Section_PenalisationDiscrProb} (which means we solve Problem \ref{PenDiscreteProbDef} instead) and (ii) the policy iteration as devised in \cite{Forsyth_Controlled_HJB_PDEs_Finance,Forsyth_NonlinearPDEFinance}; a detailed outline of policy iteration adapted to our case is given in Appendix \ref{Appendix_PolicyIteration}.
The following result is very similar to Lemma \ref{NewtonAlg_FiniteTermination}.

\begin{prop}\label{PolicyIt_FiniteTermination}
There exists a number $\lambda\in\mathbb{N}$ such that, for every starting value $x^0$\,, the method of policy iteration terminates after at most $\lambda$ steps.
\end{prop}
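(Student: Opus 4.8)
The plan is to mirror the structure of the argument for Lemma \ref{NewtonAlg_FiniteTermination}, adapting it to the policy iteration recursion. Policy iteration alternates between a \emph{policy improvement} step, in which for the current iterate $x^n$ one selects, row by row, a control $s_i^n\in\mathbb{S}$ achieving the minimum in $\min\{(A_s x^n - b_s)_i : s\in\mathbb{S}\}$, and a \emph{policy evaluation} step, in which one forms the matrix $A^{(n)}$ with $i$-th row $(A_{s_i^n})_i$ and the vector $b^{(n)}$ with $i$-th entry $(b_{s_i^n})_i$, and solves the linear system $A^{(n)} x^{n+1} = b^{(n)}$. Since each $A^{(n)}$ is assembled from rows of the $A_s$, it inherits the M-matrix property from the hypotheses of Problem \ref{DiscreteProbDef} (non-positive off-diagonals, positive diagonal, non-negative row sums, at least one positive row sum in a common location), so it is non-singular with $(A^{(n)})^{-1}\ge 0$, and the iteration is well defined.

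The argument then has two ingredients. First, there are only finitely many possible matrices $A^{(n)}$ and vectors $b^{(n)}$, since each is obtained by choosing one control out of $\mathbb{S}$ for each of the $N$ rows — at most $|\mathbb{S}|^N$ combinations. Hence $x^{n+1} = (A^{(n)})^{-1} b^{(n)}$ takes values in a finite set $\mathcal{C}$, with $|\mathcal{C}|\le|\mathbb{S}|^N$, independent of the starting value $x^0$. Second, the iteration is monotone from $n\ge 1$ onward: one shows $x^{n+1}\ge x^n$ by writing $A^{(n)}(x^{n+1}-x^n) = b^{(n)} - A^{(n)} x^n$ and arguing that the right-hand side is non-negative because, by the policy-improvement choice at step $n$, each row satisfies $(A_{s_i^n} x^n - b_{s_i^n})_i = \min_s (A_s x^n - b_s)_i \le (A_{s_i^{n-1}} x^n - b_{s_i^{n-1}})_i = 0$, the last equality holding since $x^n$ solves $A^{(n-1)} x^n = b^{(n-1)}$; then $(A^{(n)})^{-1}\ge 0$ gives $x^{n+1}\ge x^n$. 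Combining the two ingredients: a strictly increasing (componentwise, with strict increase in at least one component at each non-terminal step) sequence taking values in a finite set must stabilise after at most $|\mathcal{C}|$ steps, and once $x^{n'+1}=x^{n'}$ the recursion is stationary thereafter. Setting $\lambda := |\mathcal{C}|$ gives the claim.

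The main obstacle, as in Lemma \ref{NewtonAlg_FiniteTermination}, is making the monotonicity step watertight: one must verify that the selected policy $s_i^{n-1}$ at the \emph{previous} step indeed produces a zero residual for $x^n$ in every row (this is exactly the content of the policy-evaluation equation $A^{(n-1)} x^n = b^{(n-1)}$), and then that $s_i^n$, being a minimiser at $x^n$, produces a residual $\le 0$, so that $b^{(n)} - A^{(n)} x^n \ge 0$ componentwise. The finiteness of $\mathcal{C}$ and the stationarity-once-stabilised fact are then routine. Since the statement only asserts existence of a uniform bound $\lambda$ and not its size, no quantitative estimate beyond $\lambda\le|\mathbb{S}|^N$ is needed; I would note in passing, as the authors do for Algorithm \ref{ModifiedNewtonAlg}, that this bound is wildly pessimistic compared to observed behaviour.
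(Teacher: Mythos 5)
Your proposal is correct and follows essentially the same route as the paper's Appendix \ref{Appendix_PolicyIteration}: monotonicity of the iterates obtained from the row-wise minimising policy combined with the non-negative inverse of the resulting M-matrix, plus the observation that the iterates range over a finite set of at most $|\mathbb{S}|^N$ possible values, which forces stabilisation after a uniform number of steps. The only difference is notational (you work directly with the generic $A_s x - b_s$ of Problem \ref{DiscreteProbDef} rather than the $I+k\,\mathfrak{A}_{BS}^{r,q,\sigma}$ form used in the appendix), which does not change the argument.
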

\begin{proof}
See Appendix \ref{Appendix_PolicyIteration}.
\end{proof}

\begin{remark}\label{WhenToTerminateRemark}
The methods used to solve Problems \ref{DiscreteProbDef} and \ref{PenDiscreteProbDef} are of iterative nature, which means we need a valid check for convergence.
\begin{itemize}
\item When solving Problem \ref{PenDiscreteProbDef} by Algorithm \ref{ModifiedNewtonAlg}, we abort the iteration as soon as we find an $x^n_{\rho}$ such that
\begin{equation*}
\frac{\|(A_{s_0}\,x^n_{\rho} - b_{s_0}) -\rho \sum_{s\in\mathbb{S}\backslash\{s_0\}}\max\{b_s-A_s\,x^n_{\rho}\,,0\}\|_{\infty}}{\|\max\{b_s : s\in\mathbb{S}\}\|_{\infty}}\leq tol.
\end{equation*}
\item When solving Problem \ref{DiscreteProbDef} by policy iteration, we abort the iteration as soon as we find an $x^n$ such that
\begin{equation*}
\frac{(A_sx^n-b_s)_l}{\|\max\{b_s : s\in\mathbb{S}\}\|_{\infty}}\geq -tol\quad\forall\ (l,s)\in\mathcal{N}\times\mathbb{S}
\end{equation*}
and, for every $i\in\mathcal{N}$, there exists an $r\in\mathbb{S}$ such that
\begin{equation*}
\frac{(A_rx^n-b_r)_i}{\|\max\{b_s : s\in\mathbb{S}\}\|_{\infty}}\leq tol.
\end{equation*}
\end{itemize}
\end{remark}
\medskip

In Figures \ref{fig:Butterfly_CallSide} and \ref{fig:Butterfly_PutSide}, we see the solution to Problem \ref{BorrowLendStockFeesProbDef} for different values of $r_b$\,, $r_l$ and $r_f$.
To understand the effects, it helps to see the butterfly spread as having a \textit{call side} (the left hand side) and a \textit{put side} (the right hand side). For a short position in a vanilla call option, the hedged portfolio is long the underlying and short the bank account. Conversely, for a short position in a vanilla put option, the hedged portfolio is short the underlying and long the bank account.
Therefore, the call side of the butterfly spread rises when the borrowing rate is increased (cf.\,Figure\,\ref{fig:Butterfly_CallSide}), and the put side rises when the borrowing rate is increased or stock borrowing fees are introduced (cf.\,Figures\,\ref{fig:Butterfly_CallSide},\,\ref{fig:Butterfly_PutSide}).\bigskip

\begin{figure}[p]
\centering
\includegraphics[width=8cm,height=8cm]{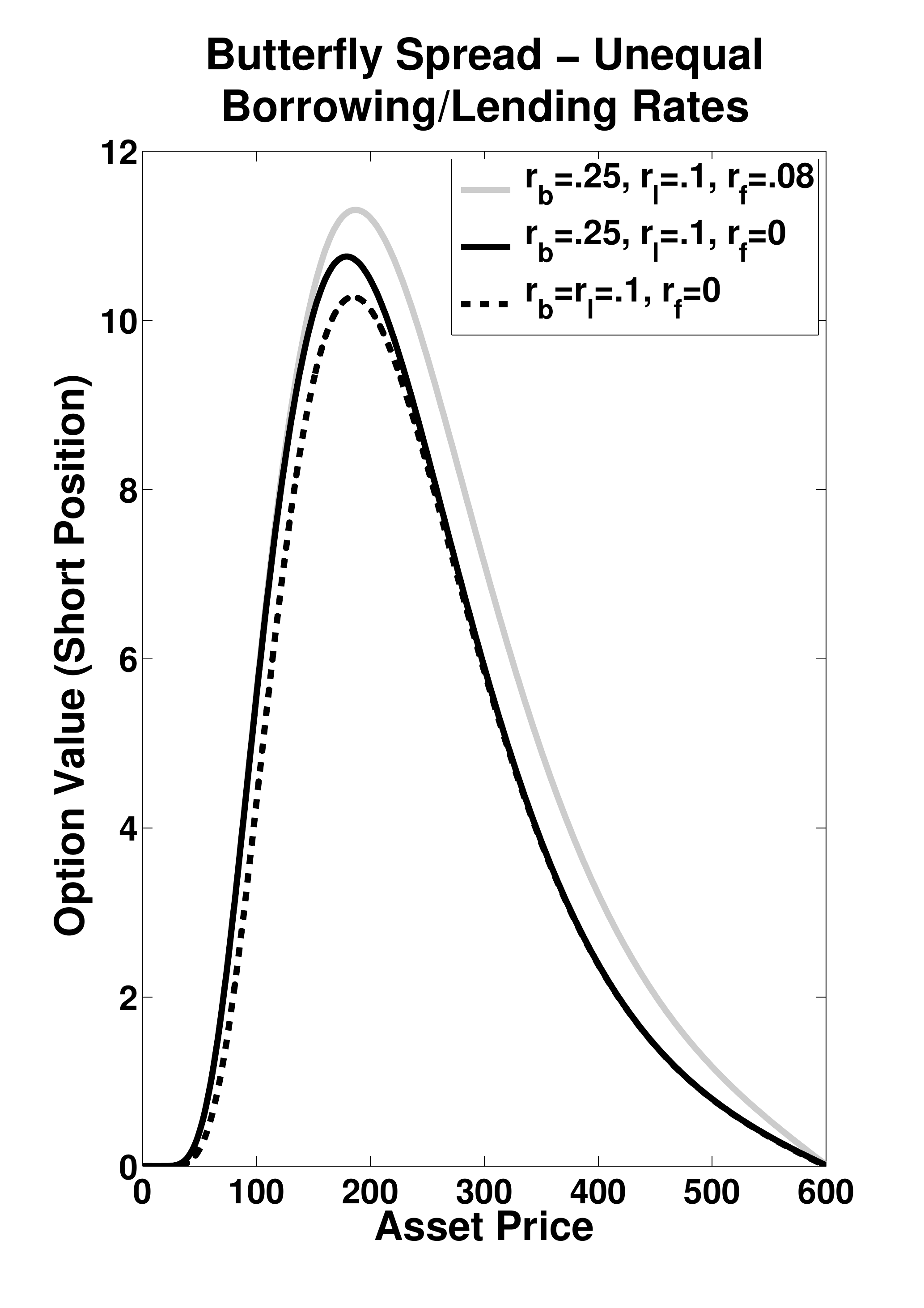}
\caption{The value of a short position in a butterfly spread. When raising the borrowing rate, the value on the left hand side (where a short position in the bank account is needed for hedging) increases and the right hand side does not change. When introducing a fee for stock borrowing, the right hand side (where a short position in the underlying is needed for hedging) increases and the left hand side does not change.}
\label{fig:Butterfly_CallSide}
\end{figure}

\begin{figure}[p]
\centering
\includegraphics[width=8cm,height=8cm]{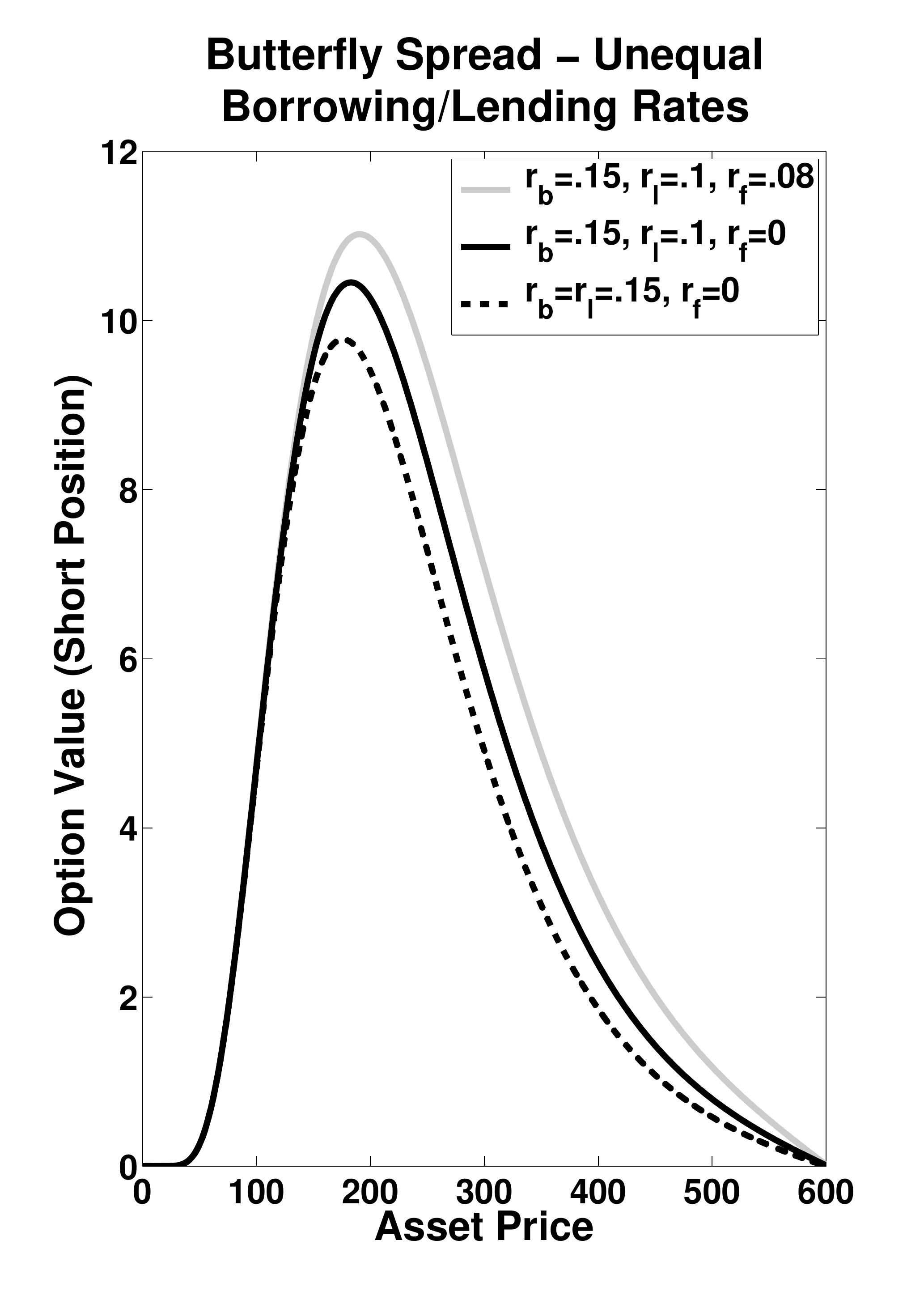}
\caption{The value of a short position in a butterfly spread. When decreasing the lending rate, the value on the right hand side (where a long position in the bank account is needed for hedging) increases and the left hand side does not change. When introducing a fee for stock borrowing, the right hand side (where a short position in the underlying is needed for hedging) increases again and the left hand side again does not change.}
\label{fig:Butterfly_PutSide}
\end{figure}

\begin{table}[t]
\begin{tabular}{|l|c|c|c|c|c|}
\hline
\textit{Policy Iteration} & $n=1$ & $n=2$ & $n=3$ & $n=4$\\
  \hline
$M$, $N=400$ & 90.50$\%$ & 9.50$\%$ & - & - \\
  \hline
$M$, $N=1000$ & 91.20$\%$ & 8.80$\%$ & - & - \\
\hline
$M=900$, $N=30$ & 99.67$\%$ & 0.33$\%$ & - & - \\
\hline
$M=30$, $N=900$ & - & 96.67$\%$ & 3.33$\%$ & - \\
\hline
\hline
\textit{Penalty Method} $(\rho = 4e03)$ & $n=1$ & $n=2$ & $n=3$ & $n=4$\\
  \hline
$M$, $N=400$ & - & - & 78.75$\%$ & 21.25$\%$ \\
  \hline
$M$, $N=1000$ & - & - & 78.40$\%$ & 21.60$\%$ \\
\hline
$M=900$, $N=30$ & - & - & 92.56$\%$ & 7.44$\%$ \\
\hline$M=30$, $N=900$ & - & - & 66.67$\%$ & 33.33$\%$ \\
\hline
\hline
\textit{Penalty Method} $(\rho = 1e06)$ & $n=1$ & $n=2$ & $n=3$ & $n=4$\\
  \hline
$M$, $N=400$ & - & - & 79.00$\%$ & 21.00$\%$ \\
  \hline
$M$, $N=1000$ & - & - & 78.20$\%$ & 21.80$\%$ \\
  \hline
$M=900$, $N=30$ & - & - & 92.44$\%$ & 7.56$\%$ \\
  \hline
$M=30$, $N=900$ & - & - & 70.00$\%$ & 30.00$\%$ \\
  \hline
\end{tabular}
\caption{The table shows the number of iterations (denoted by $n$) needed when using policy iteration and Algorithm \ref{ModifiedNewtonAlg} to solve the nonlinear discrete Problems \ref{DiscreteProbDef} and \ref{PenDiscreteProbDef}, respectively. Since the number of required iterations might differ between time steps, the percentages denote how frequently a certain number occurred. Indisputably, both schemes converge astonishingly well (since generally $n\leq4$), and, in many cases, policy iteration seems to even need only one step, whereas Algorithm \ref{ModifiedNewtonAlg} predominantly needs three.}
\label{tab:NoOfIteratios}
\end{table}

\begin{table}[t]
\begin{tabular}{|c|c|c|c|}
\hline
Grid Size & Policy & Penalty $(\rho = 4e03)$ & Penalty $(\rho = 1e06)$ \\
  \hline
$M$, $N=400$ & 0.216$s$ & 0.733$s$ & 0.740$s$ \\
  \hline
$M$, $N=1000$ & 1.148$s$ & 4.021$s$ & 4.093$s$ \\
\hline
$M=900$, $N=30$ & 0.142$s$ & 0.473$s$ & 0.475$s$ \\
\hline
$M=30$, $N=900$ & 0.0624$s$ & 0.116$s$ & 0.121$s$ \\
\hline
\end{tabular}
\caption{We see the computational time needed to solve Problems \ref{DiscreteProbDef} and \ref{PenDiscreteProbDef} using policy iteration and Algorithm \ref{ModifiedNewtonAlg}, respectively.
The penalty scheme needs roughly three to four times longer than the policy iteration, a difference which reflects the different numbers of iterations listed in Table \ref{tab:NoOfIteratios}.
}
\label{tab:RunTime}
\end{table}

To see how well the penalty approximation (cf.\,Section\,\ref{Section_PenalisationDiscrProb}) works in practice, we observe the difference between Problems \ref{DiscreteProbDef} and \ref{PenDiscreteProbDef} in dependence on the penalty parameter $\rho$, where we solve the former using policy iteration (see \cite{Forsyth_Controlled_HJB_PDEs_Finance,Forsyth_NonlinearPDEFinance} and Appendix\,\ref{Appendix_PolicyIteration}) and the latter using Algorithm \ref{ModifiedNewtonAlg}. We measure the difference in the maximum norm at time zero and keep the time and space grid fixed at $M$, $N=400$.
Clearly, Figure \ref{fig:Butterfly_PenaltyConvergence} shows that the observed convergence is of first order, which confirms the result of Corollary \ref{PenaltyConvergenceToTrueSol_ErrorEstimate}.
(We find almost identical convergence rates when using $(M,N)\in$ $\{(600,600),(1000,1000),(900,30),(30,900)\}$, i.e. the observed penalty convergence is independent of the grid size.)
\bigskip

\begin{figure}[t]
\centering
\includegraphics[width=8cm,height=8cm]{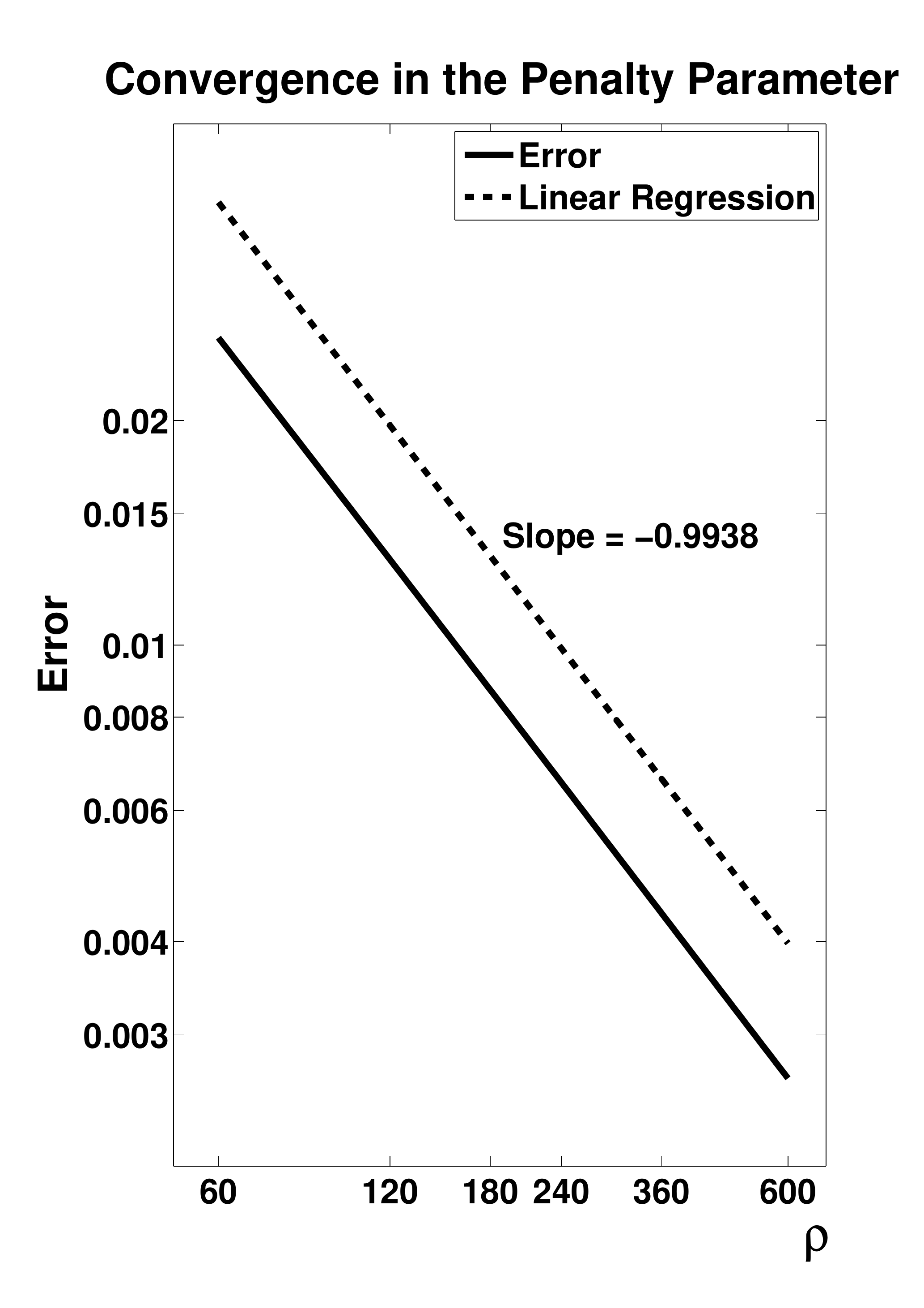}
\caption{We test the penalisation technique introduced in Section \ref{Section_PenalisationDiscrProb}. The plot (which is in log-log coordinates) shows the difference between the solutions to Problems \ref{DiscreteProbDef} and \ref{PenDiscreteProbDef} in dependence of the penalty parameter $\rho>0$. As $\rho$ is increased, the two problems converge nicely. Linear regression suggests the error is of order $O(1/\rho)$, confirming the analytical results (cf.\,Theorem\,\ref{PenaltyConvergenceToTrueSol_ErrorEstimate}).}
\label{fig:Butterfly_PenaltyConvergence}
\end{figure}

Finally, we come to analyse the numerical efficiency of the penalty method devised in this paper. Especially
in comparison with policy iteration, there are two relevant aspects to this: (i) the number of iterations needed to solve the nonlinear discrete system and (ii) the overall runtime of the method.\bigskip

The necessary numbers of iterations are listed in Table \ref{tab:NoOfIteratios}, and we see that, for the examples chosen, neither method ever needs more than four iterations; for the majority of cases, policy iteration needs only one iteration and Algorithm \ref{ModifiedNewtonAlg} needs merely three. The overall runtimes of both approaches are listed in Table \ref{tab:RunTime}, and, in consistency with the number of iterations needed for solving the nonlinear system, we see that the penalty method runs up to three to four times as long as the policy iteration; nicely, the runtime of the penalty scheme does not depend on the size of the penalty parameter.\bigskip

Since we generally expect finite termination (as stated in Lemma \ref{NewtonAlg_FiniteTermination} and Proposition \ref{PolicyIt_FiniteTermination}), one could, instead of proceeding as in Remark \ref{WhenToTerminateRemark}, also test for convergence by checking to what degree $x^n=x^{n+1}$ is satisfied. Interestingly, this is a disadvantage if a scheme converges in very few steps since one needs an extra step for verification once a certain accuracy is reached. (Using Remark \ref{WhenToTerminateRemark} to test for convergence, Table \ref{tab:NoOfIteratios} shows that, in most cases, the policy iteration reaches the desired accuracy in only one step. In \cite{Forsyth_Controlled_HJB_PDEs_Finance}, where it is tested for $|x^n-x^{n+1}|<tol$, the authors generally need two steps.)\bigskip

In total, for the current example, both approaches clearly work very well, with the policy iteration being computationally slightly more efficient (cf.\,Tables\,\ref{tab:NoOfIteratios},\,\ref{tab:RunTime}) than the penalty scheme.
Since, analytically, both algorithms have very similar properties, i.e. both algorithms converge monotonically (cf.\,Lemma \ref{NewtonAlgMonotonicity},\,\cite{Forsyth_Controlled_HJB_PDEs_Finance}) and in finitely many steps (cf.\,Lemma \ref{NewtonAlg_FiniteTermination}, Proposition \ref{PolicyIt_FiniteTermination}), it is hard to say generally which scheme should be faster.
Since the frameworks of this paper and \cite{Forsyth_Controlled_HJB_PDEs_Finance} are not identical\footnote{Loosely speaking, an HJB equation simply minimises over a (in our case finite) number of operators. The policy iteration presented in \cite{Forsyth_Controlled_HJB_PDEs_Finance} needs a time derivative to appear in all operators. The penalty scheme of this paper can easily be extended to include time independent operators, like the identity function used when comparing the value of the solution to the payoff in the case of American-style options.}
(with the example of Section \ref{Section_PenalisationDiscrProb} fitting into both setups), the question of which algorithm is advantageous might simply be down to the exact nature of the problem.

%
%


\section{Conclusion}

We have shown that penalty methods can be a powerful tool for the solution of HJB equations with finitely many controls, and we have included a rigorous analysis of the convergence properties of the overall approach. More precisely, we have discretised the HJB equation using standard finite differences and have approximated the resulting nonlinear discrete problem using a penalisation technique. If the differential operators of the equation  obey a certain structure, we have been able to prove stability and convergence of the finite difference approximation in the viscosity sense and to estimate the penalisation error to be of order $O(1/\rho)$, where $\rho>0$ is the penalty parameter. For the numerical solution of the penalised problem, we have devised an iterative method which has finite termination in theory and, for the considered examples, converges in less than four steps in practice. There is a wide range of problems in mathematical finance matching our framework; in particular, we have presented numerical results for European option pricing in a model with different borrowing/lending rates and stock borrowing fees and have demonstrated the competitiveness of our approach.

\appendix
\section{The Method of Policy Iteration}\label{Appendix_PolicyIteration}
We use this appendix to briefly describe the policy iteration scheme for the numerical solution of Problem \ref{BorrowLendStockFeesProbDef_Discretised} as devised in \cite{Forsyth_Controlled_HJB_PDEs_Finance}.
Also, we show how convergence and finite termination can easily be deduced.\bigskip

In the situation of Problem \ref{BorrowLendStockFeesProbDef_Discretised}, we define
\begin{equation*}
\mathfrak{A}_{BS}^{r,q,\sigma}:=\frac{1}{k}(A_{BS}^{r,q,\sigma}-I),\quad (r,q)\in\mathbb{S},
\end{equation*}
where $I\in\mathbb{R}^N$ denotes the identity matrix; in this new notation, Problem \ref{BorrowLendStockFeesProbDef_Discretised} can be reformulated: Find $(V^j)_{j\in\mathcal{M}}\subset\mathbb{R}^N$ such that
\begin{equation}
\frac{V^{j-1}-V^j}{k} +
\min\big\{\mathfrak{A}^{r,q,\sigma}_{BS}V^{j-1} : (r,q)\in\mathbb{S}\big\} = 0,\quad j\in\mathcal{M}\backslash\{1\}.\label{Appendix_Eq1}
\end{equation}
Given some $V^j$, $j\in\mathcal{M}\backslash\{1\}$, the following algorithm solves equation \eqref{Appendix_Eq1} for $V^{j-1}$.

\begin{alg}\label{PolicyIteration_Alg}(Policy Iteration)
Let $x^0\in\mathbb{R}^N$ be some starting value. Then, for known $x^n$, $n\geq 0$, find $x^{n+1}$ such that
\begin{equation}
(I+k\,\mathfrak{A}_{BS}^{r^n,q^n,\sigma})x^{n+1}=V^j,\label{PolicyIteration_Alg_Eq1}
\end{equation}
where $(r^n,q^n)\in\argmin\{\,\mathfrak{A}_{BS}^{r,q,\sigma}x^n:(r,q)\in\mathbb{S}\}$.
\end{alg}
Now, for $n\geq 1$, we have that
\begin{align*}
&(I+k\,\mathfrak{A}_{BS}^{r^n,q^n,\sigma})(x^{n+1}-x^n)\\
\geq\
&(I+k\,\mathfrak{A}_{BS}^{r^n,q^n,\sigma})x^{n+1}
-
(I+k\,\mathfrak{A}_{BS}^{r^{n-1},q^{n-1},\sigma})x^n=0,
\end{align*}
which, knowing that $I+k\,\mathfrak{A}_{BS}^{r^n,q^n,\sigma}$ is an M-matrix, means that $x^{n+1}\geq x^n$.
Furthermore, since $\{(I+k\,\mathfrak{A}_{BS}^{r,q,\sigma})^{-1}: (r,q)\in\mathbb{S}\}$ is a finite set, we may deduce the convergence of $(x^n)_{n\in\mathbb{N}}$ and the existence of a $\lambda\in\mathbb{N}$, independent of $x^0$, such that $x^{n+1}=x^{n}$ for all $n\geq \lambda$. If $x^{n+1}=x^n$, expression \eqref{PolicyIteration_Alg_Eq1} transforms into
\begin{align*}
(I+k\,\mathfrak{A}_{BS}^{r^n,q^n,\sigma})x^n=&\ V^j\\
\Leftrightarrow\quad
x^n +
k \min\big\{\,\mathfrak{A}^{r,q,\sigma}_{BS}x^n : (r,q)\in\mathbb{S}\big\} =&\ V^j,
\end{align*}
which means that $x^n$ solves \eqref{Appendix_Eq1}.

\renewcommand{\bibname}{References}

\bibliography{Paper_PenSchemeDiscrHJB_References}
\bibliographystyle{plain}

\end{document}